\documentclass[sigconf]{aamas}

\usepackage{balance} % for balancing columns on the final page
\usepackage{braket}
\usepackage{amsmath}
\usepackage{amsthm}
\newtheorem{assumption}{Assumption}
\newtheorem{problem}{Problem}
\usepackage{algorithm}
\usepackage{algpseudocode}
\usepackage{multirow}
\usepackage{verbatim}

\newcommand{\dist}{\mathrm{d}_{\mathcal{T}}} 
\newcommand{\normt}[1]{\|#1\|_{\mathcal{T}}} 
\newcommand{\inp}[2]{\langle #1, #2 \rangle} 
\newcommand{\inpt}[2]{\langle #1, #2 \rangle_{\mathcal{T}}} 
\newcommand{\logmap}[2]{\log_{#1}(#2)} 
\newcommand{\proj}{\Pi_{\mathcal{T}}}

\def\mathbi#1{\textbf{\em #1}}

\newtheorem{theorem}{Theorem}

\doi{ADPL7324}

\makeatletter
\gdef\@copyrightpermission{
  \begin{minipage}{0.2\columnwidth}
   \href{https://creativecommons.org/licenses/by/4.0/}{\includegraphics[width=0.90\textwidth]{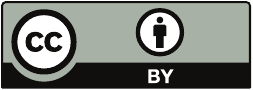}}
  \end{minipage}\hfill
  \begin{minipage}{0.8\columnwidth}
   \href{https://creativecommons.org/licenses/by/4.0/}{This work is licensed under a Creative Commons Attribution International 4.0 License.}
  \end{minipage}
  \vspace{5pt}
}
\makeatother

\setcopyright{ifaamas}
\acmConference[AAMAS '26]{Proc.\@ of the 25th Intern. Conf.
on Autonomous Agents and Multiagent Systems (AAMAS 2026)}{May 25--29, 2026}
{Paphos, Cyprus}{C.~Amato, L.~Dennis, V.~Mascardi, J.~Thangarajah (eds.)}
\copyrightyear{2026}
\acmYear{2026}
\acmDOI{}
\acmPrice{}
\acmISBN{}

\title[Distributed Quantum Gaussian Processes]{Distributed Quantum Gaussian Processes for Multi-Agent Systems}

\author{Meet Gandhi}
\affiliation{
  \institution{Colorado School of Mines}
  \city{Golden}
  \country{USA}}
\email{meet_gandhi@mines.edu}

\author{George P. Kontoudis}
\affiliation{
  \institution{Colorado School of Mines}
  \city{Golden}
  \country{USA}}
\email{george.kontoudis@mines.edu}

\begin{abstract}

Gaussian Processes (GPs) are a powerful tool for probabilistic modeling, but their performance is often constrained in complex, large-scale real-world domains due to the limited expressivity of classical kernels. Quantum computing offers the potential to overcome this limitation by embedding data into exponentially large Hilbert spaces, capturing complex correlations that remain inaccessible to classical computing approaches. In this paper, we propose a Distributed Quantum Gaussian Process (DQGP) method in a multi-agent setting to enhance modeling capabilities and scalability. To address the challenging non-Euclidean optimization problem, we develop a Distributed consensus Riemannian Alternating Direction Method of Multipliers (DR-ADMM) algorithm that aggregates local agent models into a global model. We evaluate the efficacy of our method through numerical experiments conducted on a quantum simulator in classical hardware. We use real-world, non-stationary elevation datasets of NASA's Shuttle Radar Topography Mission and synthetic datasets generated by Quantum Gaussian Processes. Beyond modeling advantages, our framework highlights potential computational speedups that quantum hardware may provide, particularly in Gaussian processes and distributed optimization.

\end{abstract}

\keywords{Gaussian Processes; Quantum Computing; Distributed Optimization; Riemannian ADMM; Multi-Agent Systems}

\newcommand{\BibTeX}{\rm B\kern-.05em{\sc i\kern-.025em b}\kern-.08em\TeX}

\begin{document}

%%% The following commands remove the headers in your paper. For final 
%%% papers, these will be inserted during the pagination process.

\pagestyle{fancy}
\fancyhead{}

%%% The next command prints the information defined in the preamble.

\maketitle

\noindent\textbf{Code:} \href{https://github.com/mpala-lab/distributed-quantum-gaussian-processes}{github.com/mpala-lab/distributed-quantum-gaussian-processes}

%%%%%%%%%%%%%%%%%%%%%%%%%%%%%%%%%%%%%%%%%%%%%%%%%%%%%%%%%%%%%%%%%%%%%%%%

\section{Introduction}

Decision-making in autonomous systems relies on reliable %risk
uncertainty quantification. Gaussian processes (GPs), as an inherently probabilistic modeling technique, satisfy %this 
the need through accurate predictions and principled uncertainty estimation. To learn a GP model that characterizes %of an unknown space, an automated agent interacts with its environment to sample \textit{interesting} data points.
the intrinsic dynamics of an unknown environment, an agent typically samples informative data points from the environment. 
%With a single agent, this exploration process becomes cumbersome. 
However, training a GP model on $N$ samples involves~$\mathcal{O}(N^3)$ computations and $\mathcal{O}(N^2)$ memory. When a single agent is responsible for exploration and computation, the process becomes not only computationally demanding but also time-sensitive, as the agent must physically traverse all locations to gather samples. 
%with $N$ sampled data points in the space, a GP model training typically demands $\mathcal{O}(N^3)$ computational time complexity and $\mathcal{O}(N^2)$ space complexity. 
This limitation restricts the applicability of GPs to large-scale datasets and %expansive 
environments---conditions commonly encountered in %realistic 
autonomous systems---especially for single-agent systems. To this end, GP approximations have been introduced that can be broadly categorized into two main classes: exact aggregation methods and inducing point-based approximation methods~\cite{liu2020gaussian}. We focus on the former class which serve as distributed GP (DGP) approaches~\cite{deisenroth2015distributed,kontoudis2021decentralized,kontoudis2025multi}. %have been developed, %facilitating the training of GP models for data sizes that are infeasible with a single agent. 
These methods enable GP training on dataset sizes that would otherwise be infeasible for a single agent. In particular, they deploy multiple agents % or computational nodes 
in local %ly varied 
regions of the input space, %enabling the learning of distinct local GP models relevant to the unknown space. 
allowing each to learn a local GP model that captures regional characteristics. 
%These 
The local GP models %eventually contribute to forming a global GP model for the environment through multi-agent collaboration, which can occur in either a centralized \cite{xie2019distributed} or decentralized manner~\cite{kontoudis2024scalable}. 
are then aggregated through multi-agent coordination to form a global GP model. %, either in centralized~\cite{xie2019distributed} or decentralized~\cite{kontoudis2024scalable} settings.
%Note that in DGP, the dataset is stored locally, and the overall computation burden is also divided among the agents, making DGP applicable to extensive unknown spaces. Therefore, DGP solves the scalability issue with GP.
 FACT-GP \cite{deisenroth2015distributed} and its generalized version g-FACT-GP \cite{liu2018generalized} %belong to the former class, where the 
 enforce partitioning of sampled data, % is partitioned, 
 and the resulting local posteriors are subsequently aggregated. In addition, apx-GP \cite{xie2019distributed} and gapx-GP~\cite{kontoudis2023decentralized} reach global consensus by using the multi-agent Alternating Direction Method of Multipliers (ADMM)~\cite{chang2014multi}. By distributing both data storage and computational effort among agents, DGP methods effectively overcome the scalability limitations of standard GPs. %, making probabilistic learning in complex, large-scale domains feasible.

Gaussian Processes employ kernel functions \cite{kanagawa2018gaussian} to model the correlations among the data points by projecting them into a high-dimensional feature space. This %kernel 
mapping enables GPs to capture complex relationships. However, %these 
the classical kernels % functions 
possess limited expressivity due to the underlying mathematical formulation that is tractable on classical hardware. This shortcoming can be addressed through the emerging field of quantum computing. Our goal in this work is to leverage quantum computing to develop powerful and scalable GPs. Specifically, we aim to design a distributed framework for Quantum Gaussian Processes %(DQGPs) 
that exploits the expressive capability of quantum kernels while efficiently distributing the computational and memory load across multiple agents. % in large-scale complex systems.} %{\color{red} a high level objective of the proposed method  is missing}

%{\color{red}You need a paragraph here to discuss distributed Gaussian process training methods, e.g., FACT-GP~\cite{deisenroth2015distributed}, g-FACT-GP~\cite{liu2018generalized}, apx-GP~\cite{xie2019distributed}, gapx-GP~\cite{kontoudis2024scalable}. }

The current generation of quantum hardware, termed as the NISQ (Noisy Intermediate-Scale Quantum) era~\cite{preskill2018quantum}, lacks %the property of 
fault tolerance, making it %harder to achieve 
challenging to realize a clear quantum advantage. This %has given rise to 
limitation has motivated the development of several hybrid quantum-classical techniques, termed as variational quantum algorithms (VQAs)~\cite{cerezo2021variational}. VQAs employ parameterized quantum circuits~\cite{benedetti2019parameterized}, where each circuit parameter serves as an optimization variable adjusted to %achieve a predefined goal for a system. %In VQA, s
minimize a cost function. The system dynamics are modeled within the quantum domain, %leading to much
enabling faster gradient %computation 
evaluation~\cite{wierichs2022general}, %after which optimizing parameters are updated using classical algorithms. 
while parameter updates are performed using classical optimizers. A notable example of VQA is the Quantum Approximate Optimization Algorithm (QAOA)~\cite{guerreschi2019qaoa}, % and has been used 
widely applied to %to solve 
combinatorial optimization problems. %Note that with 
However, VQAs face several challenges, including %A disadvantage of VQAs is that 
the optimization landscape that often contains %gets stuck in vast, 
large, flat regions %, leading to the vanishing gradient problem. 
which cause gradients to vanish. This phenomenon is called barren plateaus \cite{larocca2025barren} %and is an active field of research
and remains an active research topic~\cite{mcclean2018barren, patti2021entanglement, sack2022avoiding, peng2025breaking, cunningham2025investigating}. Moreover, with the advancement of quantum hardware, exemplified by the milestone achievement of quantum supremacy \cite{arute2019quantum}, it becomes crucial to devise quantum algorithms that can be implemented in a parallelized and distributed fashion \cite{parekh2021quantum}, where quantum circuit evaluations can be allocated %across 
to multiple quantum processors to enhance scalability and robustness. %In recent years, quantum computing has been applied to classical control problems~\cite{berberich2024quantum} projecting classical data into quantum states, which lie in a classically intractable quantum Hilbert space. This exponentially large space enables the representation of subtle and complicated covariances that are inaccessible to classical kernels, hence allowing for the creation of highly expressive GP models. 

\begin{figure*}[!t]
	\includegraphics[width=.985\textwidth]{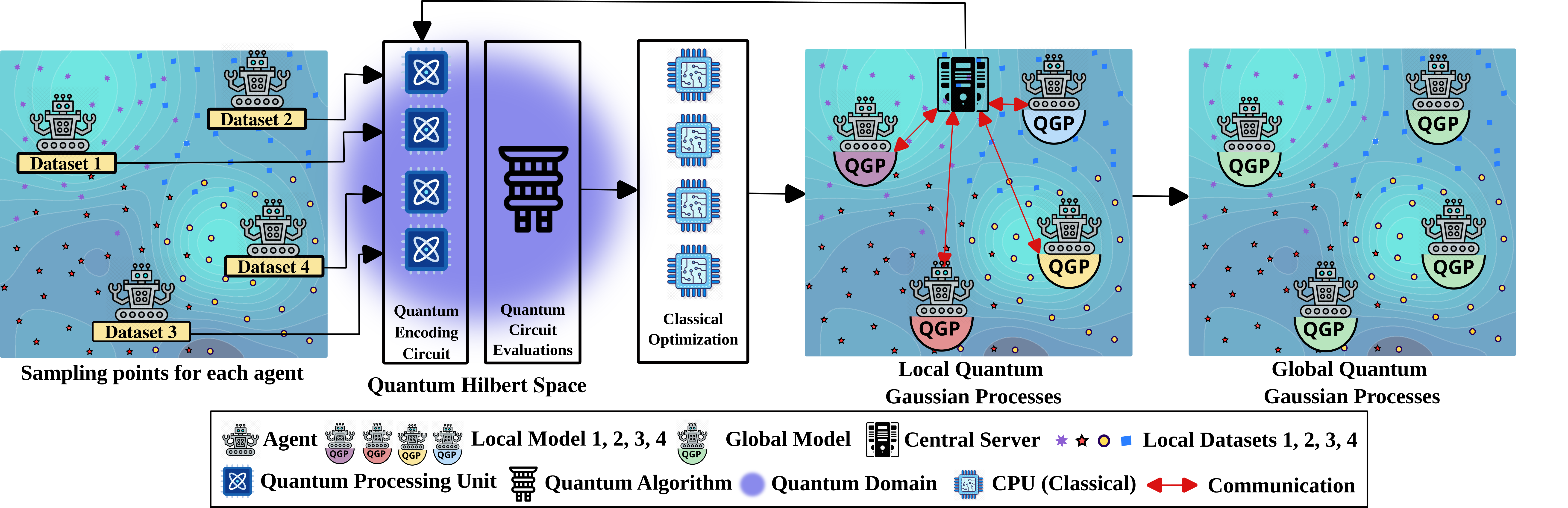}
	\centering
	\caption{Distributed Quantum Gaussian Process (DQGP): A hybrid classical-quantum framework for multi-agent systems.}
	\label{fig:application}
\end{figure*}

A major advancement in quantum computing in recent years has been the introduction of quantum kernel functions~\cite{huang2021power}, %resulting in their use in 
which have led to the development of Quantum Gaussian Processes (QGPs) \cite{rapp2024quantum}. % among others~\cite{arceci2024gaussian, smith2023faster}. %{\color{red}examples}. 
Quantum kernels incorporate parameterized quantum circuits, called Quantum Encoding Circuits \cite{hubregtsen2022training}, to embed %the 
classical data into the quantum domain. Considerable research work %is dedicated 
has been conducted to %devising 
designing %these novel quantum kernel functions 
kernels %aiming ideal target 
with the goal of achieving optimal alignment between quantum feature space and %with the 
classical data labels. %In addition to 
Beyond QGPs, these quantum kernels have proven beneficial in other domains of %quantum 
machine learning~\cite{arceci2024gaussian, smith2023faster}. % tasks~\cite{arceci2024gaussian, smith2023faster}. 
In particular,%~\cite{yun2022quantum, park2023quantum, chen2025quantum} 
~\cite{park2023quantum, chen2025quantum} employ quantum kernels with reinforcement learning in multi-agent settings.
%The exploration of these quantum kernel-based algorithms is facilitated by a prominent Python library $sQUlearn$ \cite{kreplin2025squlearn}. In our work, we utilize the encoding circuits and kernels from $sQUlearn$. 
%These algorithms are commonly implemented using libraries such as \textsc{sQUlearn}~\cite{kreplin2025squlearn}, a Python framework for quantum kernel-based learning. In our work, we adopt quantum encoding circuits and kernels provided by \textsc{sQUlearn}. 
%Apart from 
In addition to QGP training, significant efforts have been %made 
directed to leveraging quantum algorithms for GP inference. Numerous methods~\cite{zhao2019quantum,kus2021sparse,chen2022quantum,farooq2024quantum,galvis2025provable} have been formulated based on the Harrow-Hassidim-Lloyd (HHL) algorithm \cite{lloyd2010quantum} %, which solves the quantum system of linear equations. These methodologies \cite{chen2022quantum}, \cite{zhao2019quantum}, \cite{farooq2024quantum}, \cite{galvis2025provable} and \cite{kus2021sparse} develop quantum circuits intending to speed up the computation of inverse kernel matrices required for GP inference. 
%have been proposed to 
%and 
to accelerate the computation of inverse kernel matrices required for prediction.

% A %profound 
% theoretical %result was presented recently, where 
% connection between %specialized quantum circuits called 
% Quantum Neural Networks (QNNs) and %Gaussian Processes 
% GPs is established in~\cite{garcia2025quantum,girardi2025trained}. %It mentions that u
% Under specific conditions, a QNN with high expressive powers, i.e., %with an 
% increasing number of qubits or layers, has been shown to converge to a GP. The theoretical equivalence provides a powerful foundation for studying the GPs through the lens of QNNs and vise versa. %, %therefore, reinforcing the need for analyzing and understanding the fundamentals of QGPs further. 
% %underscoring the importance of understanding the fundamental behavior of QGPs.
% %To efficiently investigate QGPs with large-scale datasets and real-world environments, the scalability and operation in a multi-agent active learning setting is essential. 
% In this work, we %provide a first study into 
% introduce a Distributed Quantum Gaussian Process (DQGP) framework to extend the modeling and computational capabilities of QGPs. The concept overview of our proposed approach is shown in Fig.~\ref{fig:application}.

\subsubsection*{Contribution} The contribution of this paper is twofold. First, we introduce the distributed consensus Riemannian ADMM (DR-ADMM) %algorithm, a distributed 
optimization %scheme 
that can efficiently train parameterized quantum circuits across multiple agents. Next, we formulate the distributed quantum gaussian process (DQGP), % framework, 
which successfully scales the expressive power of quantum kernels utilizing quantum aspects of \textit{entanglement}, \textit{superposition} and \textit{measurements} to handle complex, multi-agent scenarios (Fig.~\ref{fig:application}). %Rigorous 
Numerical experiments on non-stationary fields %establish our proposed methodology's 
demonstrate the %superior 
enhanced performance of the proposed method %when 
compared to other classical %distributed GP~
approaches.
%In Section \ref{sec2}, we first introduce the mathematical formulations of classical GPs, then establish the quantum computing principles necessary for constructing quantum GPs, and finally lay out the classical distributed GP approach. Thereafter, we present our two contributions, Distributed Consensus Riemannian ADMM and Distributed Quantum Gaussian Process, in Section \ref{sec3}. Ultimately, in Section \ref{sec4}, we put forth our numerical experiments setup and the results.

%%%%%%%%%%%%%%%%%%%%%%%%%%%%%%%%%%%%%%%%%%%%%%%%%%%%%%%%%%%%%%%%%%%%%%%%

\section{%Foundations and 
Problem Formulation}\label{sec2} % this is bad labeling
In this section, we discuss classical GPs, the function-space view of GPs that connects classical to quantum computing, quantum GP regression, distributed classical GP training, and state the problem. % of distributed quantum GP training. 

% \subsection{Preliminaries}
% The notation is standard throughout the paper. $\mathbb{R}_{>0}$ and $\mathbb{Z}_{>0}$ represent the set of positive real numbers and integers, respectively. Similarly, $\mathbb{R}_{\geq0}$ and $\mathbb{Z}_{\geq0}$ denote the set of non-negative real numbers and integers. A variable in vector, matrix, and scalar form is depicted by $\mathbi{x}$, $\mathbi{X}$, and $x$ respectively. The superscript in parentheses $\mathbi{x}^{(s)}$ symbolizes the vector $\mathbi{x}$ in the $s^{\text{th}}$ iteration of a recursive algorithm. The subscript $x_i$ represents the $i^{\text{th}}$ element of vector $\mathbi{x}$. Hence, a set of the elements of vector $\mathbi{x} \in \mathbb{R}^N$ can be denoted as $\{x_i\}^{N}_{i=1}$. The subscript of a vector $\mathbi{y}_m$ depicts the vector $\mathbi{y}$ for $m^{\text{th}}$ agent. The absolute value of a scalar, a vector, or a complex number is symbolized as $\vert \,.\, \vert$, the $L_2$ norm as $||.||_2$ and the infinity norm as $||.||_{\infty}$. The cardinality of a set $V$ is denoted by $\text{card}(V)$. The computational complexities are depicted by $\mathcal{O}(.)$. Note that a quantum kernel $\kappa$ here can take the form of a function or a matrix; therefore, $\kappa$ represents the functional form, whereas $\boldsymbol{\kappa}$ represents the matrix form. 

\subsection{Classical Gaussian Processes (GPs)}
GP regression is a non-parametric Bayesian modeling approach \cite{ghahramani2013bayesian} that %aims at learning 
provides a probability distribution over an infinite-dimensional space of functions.  The system observations are modeled as %of the form 
$y(\textbf{\textit{x}}) = f(\textbf{\textit{x}}) + \epsilon$, where $\textbf{\textit{x}} \in \mathbb{R}^D$ is the %data points in 
input of a \textit{D} dimensional space, $y \in \mathbb{R}$ is the corresponding label, $f(\textbf{\textit{x}}) \sim \mathcal{GP} (0, k(\textbf{\textit{x}}, \textbf{\textit{x}}'))$ is a zero-mean GP %generator function 
with covariance function $k: \mathbb{R}^D \times \mathbb{R}^D \rightarrow \mathbb{R}$, % encoding the correlation between data points $\textbf{\textit{x}}$ and $\textbf{\textit{x}}'$ 
and $\epsilon \sim \mathcal{N} (0, \sigma^2_\epsilon)$ the i.i.d. zero mean Gaussian measurement noise with variance $\sigma^2_{\epsilon} > 0$. The objective of GP regression then is to estimate the latent function $f$ given dataset $\mathcal{D} = \{\mathbi{X}, \mathbi{y}\}$ with inputs $\mathbi{X} = \{\mathbi{x}_{n}\}^{N}_{n=1}$, outputs $\mathbi{y} = \{y_n\}^{N}_{n=1}$, where $N$ is the number of observations. 

%\subsubsection{Gaussian Process Training}
 %This is accomplished by determining 
 We determine the hyperparameters $\boldsymbol{\theta}$ of the covariance function by using maximum likelihood estimation. % that %of the covariance function that maximize 
 %employs t
 The negative marginal log-likelihood function takes the form of, %which is given by,
\begin{eqnarray}\label{eqn:LL}
    \mathcal{L} = \log p(\mathbi{y}\,\vert \,\mathbi{X}) =  \frac{1}{2} \Big( \mathbi{y}^\intercal \mathbi{C}^{-1}_{\boldsymbol{\theta}} \mathbi{y} + \log \vert\mathbi{C}_{\boldsymbol{\theta}} \vert + N \log(2\pi) \Big),
\end{eqnarray}
where $\mathbi{C}_{\theta} = \mathbi{K} \, + \, \sigma^{2}_{\epsilon}\, \mathbi{I}_{N}$ is the positive definite covariance matrix %needed for numerical stability with $\boldsymbol{\theta}$ as the hyperparameter set 
and $\mathbi{K} = k(\mathbi{X}\, , \mathbi{X}\,) \succeq 0 \in \mathbb{R}^{N \times N}$ is the positive semi-definite correlation matrix between %data points 
inputs of \mathbi{X}. The GP training %regression 
problem %ultimately can be 
yields, %is formulated as, % the minimization of the negative marginal log-likelihood (NLL) function  as follows,
\begin{eqnarray}\label{eqn:gpopt}
    \hat{\boldsymbol{\theta}} = \underset{\boldsymbol{\theta}}{\arg\min}  \, \mathbi{y}^\intercal \mathbi{C}^{-1}_{\boldsymbol{\theta}} \mathbi{y} + \log \vert\mathbi{C}_{\boldsymbol{\theta}} \vert,
\end{eqnarray}
which can be solved using gradient-based optimization methods %require the %computation of 
with partial derivative of the objective, %that takes the form of, %NLL partial derivative w.r.t. the hyperparameter set $\boldsymbol{\theta},$
\begin{eqnarray}\label{eqn:gradient}
    \frac{\partial \mathcal{L}(\boldsymbol{\theta})}{\partial \boldsymbol{\theta}} = \frac{1}{2} \mathrm{Tr} \left\{ \left(\mathbi{C}^{-1}_{\boldsymbol{\theta}} - \mathbi{C}^{-1}_{\boldsymbol{\theta}} \mathbi{y} \mathbi{y}^\intercal \mathbi{C}^{-1}_{\boldsymbol{\theta}} \right) \frac{\partial \mathbi{C}_{\boldsymbol{\theta}}}{\partial \boldsymbol{\theta}} \right\}.
\end{eqnarray}

% {\color{green}here}

\subsection{Function-Space View of Gaussian Processes}
The preceding discussion of GPs focuses on a formulation in a finite-dimensional space associated with a discrete dataset $\{\mathbi{X}, \mathbi{y}\}$. We can generalize to an infinite-dimensional function-space view, where a GP is defined as $ f(\mathbi{x}) \sim \mathcal{GP} (m(\mathbi{x}\,), k(\textbf{\textit{x}}, \textbf{\textit{x}}'))$ with $m(\mathbi{x}\,)$ the mean function and $k(\textbf{\textit{x}}, \textbf{\textit{x}}')$ the covariance function~ \cite{shi2011gaussian}. %Now, GP can be defined as a collection of random variables that perform the task of assigning probability to every possible function based on the prior derived from the covariance function.
In this way, a GP can be interpreted as a collection of random variables, any finite subset of which follows a joint Gaussian distribution. This allows the GP to assign probabilities over the space of possible functions, governed by the prior induced by the covariance. The function-space view provides a critical link to the kernel function~$\kappa$ %with feature map $\phi$ written as, 
through the associated feature map $\phi$, expressed as,
\begin{eqnarray} \label{eqn:genkernel}
    \kappa(\mathbi{x}, \mathbi{x}') = \phi (\mathbi{x})^\intercal \phi(\mathbi{x}').
\end{eqnarray}
In classical computing settings, the kernel function measures the correlation between two data points in a high-dimensional feature space without explicitly mapping the data into that space. In the quantum domain, the classical data are encoded into quantum states within a \textit{quantum Hilbert space} using \textit{Quantum Encoding Circuits}~\cite{schuld2019quantum}. The Hilbert space is exponentially large, and thus a classical computer %ing machine
would struggle even to represent the quantum states, let alone compute their inner product~\eqref{eqn:genkernel}. \textit{Quantum kernel functions}~\cite{schuld2021quantum} enable the estimation of correlations between these encoded quantum states, capturing complex and highly non-linear relationships in the original classical data. %can then be utilized to estimate the correlation between the mapped quantum states. This would allow for modeling more complex and non-linear relationships between the original classical data.

%\subsubsection{\textbf{Quantum Encoding Circuits}.\\}
\subsection{Quantum Gaussian Processes (QGPs)}
The first step in constructing a Quantum Gaussian process (QGP) is to encode classical data into quantum states. This is achieved using Quantum encoding circuits which %Quantum encoding circuits can be expressed as 
implement a mapping from a classical data vector $\mathbi{x} \in \mathbb{R}^D$ to a quantum state $\ket{\psi_{\mathbi{x}}}$ in quantum Hilbert space $\mathcal{H},$ given by $\Phi : \mathbi{x} \rightarrow \ket{\psi_{\mathbi{x}}} = U(\mathbi{x}, \boldsymbol{\theta}) \ket{0}^{\otimes q},$
% \begin{eqnarray*}
%     \Phi : \mathbi{x} \rightarrow \ket{\psi_{\mathbi{x}}} = U(\mathbi{x}, \boldsymbol{\theta}) \ket{0}^{\otimes q},
% \end{eqnarray*}
where $\Phi$ is the encoding map, $U(\mathbi{x}, \boldsymbol{\theta})$ a unitary operator representing the entire quantum encoding circuit with $\boldsymbol{\theta}$ the trainable hyperparameters, and $\ket{0}^{\otimes q}$ the initial quantum state of a system comprising $q$ qubits. The circuit $U(\mathbi{x}, \boldsymbol{\theta})$ is composed of quantum unitary logic gates~\cite{Williams2011}, primarily rotational gates $\mathbi{R}_{\mathbi{X}}(\boldsymbol{\sigma}_{\mathbf{X}}, \theta_{x}), \mathbi{R}_{\mathbi{Y}}(\boldsymbol{\sigma}_{\mathbf{Y}},\theta_{y}), \mathbi{R}_{\mathbi{Z}}(\boldsymbol{\sigma}_{\mathbf{Z}},\theta_{z})$ with $2 \times 2$ Pauli matrices $(\boldsymbol{\sigma}_{\mathbf{X}}, \boldsymbol{\sigma}_{\mathbf{Y}}, \boldsymbol{\sigma}_{\mathbf{Z}})$ and their controlled versions. Additional gates include the Hadamard $\mathbf{H}$, Phase $\mathbf{P}$, $\mathbf{CNOT}$, and %occasionally 
the $\mathbf{SWAP}$ gate. %Their matrix forms yields,
% \begin{align}\label{eqn:quantumgates}
% \mathbi{R}_{\mathbi{X}}(\theta_{x}) = \exp{\bigg\{-i\theta_{x}\begin{pmatrix} 0 & 1 \\ 1 & 0\end{pmatrix}\bigg\}} &= \begin{bmatrix} \cos(\theta_x/2) & -i\sin(\theta_x/2) \\ -i\sin(\theta_x/2) & \cos(\theta_x/2) \end{bmatrix}, \nonumber
% \\
% \mathbi{R}_{\mathbi{Y}}(\theta_{y}) = \exp{\bigg\{-i\theta_{y}\begin{pmatrix} 0 & -i \\ i & 0\end{pmatrix}\bigg\}} &= \begin{bmatrix} \cos(\theta_y/2) & -\sin(\theta_y/2) \\ \sin(\theta_y/2) & \cos(\theta_y/2)  \end{bmatrix}, \nonumber
% \\
% \mathbi{R}_{\mathbi{Z}}(\theta_{z}) = \exp{\bigg\{-i\theta_{z}\begin{pmatrix} 1 & 0 \\ 0 & -1\end{pmatrix}\bigg\}} &= \begin{bmatrix} e^{-i\theta_z/2} & 0 \\ 0 & e^{i\theta_z/2}   \end{bmatrix}, \nonumber
% \\
% \mathbf{CNOT} = \begin{bmatrix} 1&0&0&0 \\ 0&1&0&0 \\ 0&0&0&1 \\ 0&0&1&0\end{bmatrix}&, \quad
% \mathbf{SWAP} = \begin{bmatrix} 1&0&0&0 \\ 0&0&1&0 \\ 0&1&0&0 \\ 0&0&0&1\end{bmatrix}, \nonumber
% \\
% \mathbf{H} = \frac{1}{\sqrt{2}}\begin{bmatrix} 1 & 1 \\ 1 & -1\end{bmatrix}&, \quad
% \mathbf{P} = \begin{bmatrix} 1 & 0 \\ 0 & i\end{bmatrix}.
% \end{align}
%If the encoding circuit contains solely of rotational gates, then the trainable hyperparameter vector $\boldsymbol{\theta}$ is purely periodic. 
Moreover, %the circuit 
$U(\mathbi{x}, \boldsymbol{\theta})$ can be $\iota$-layered with an identical gate structure in each %of the $\iota$ 
layer, i.e., 
%\begin{eqnarray*}
$U(\mathbi{x}, \boldsymbol{\theta}) = U_{\textrm{final}} U_{\iota}(\mathbi{x}, \boldsymbol{\theta}_\iota) \dots U_{1}(\mathbi{x}, \boldsymbol{\theta_1})$.
%\end{eqnarray*}

% {\color{red}Is it necessary to include the matrix form of all these gates? Also, do we use the concepts described in the next two paragraphs? I am removing them from the conference paper to save some space, but if they are important we can add them back.}

% A significant body of research focuses on designing effective quantum encoders with varied gate compositions %that can be utilized in 
% suitable for hybrid quantum-classical systems. One such approach %is finding 
% involves optimizing $\boldsymbol{\theta}$ %that maximizes 
% to maximize the target alignment objective function~\cite{coelho2025quantum} for a dataset $\mathcal{D} = \{\mathbi{X}, \mathbi{y}\}$,
% \begin{eqnarray*}
%     \mathcal{TA}(\Phi_{\boldsymbol{\theta}}) = \frac{\sum_{i,j} y_i y_j \Phi_{\boldsymbol{\theta}}(\mathbi{x}_i, \mathbi{x}_j)}{\sqrt{\sum_{i,j} (\Phi_{\boldsymbol{\theta}}(\mathbi{x}_i, \mathbi{x}_j))^2 \sum_{i,j} y_i^2 y_j^2}}.
% \end{eqnarray*}

% \subsubsection{\textbf{Quantum Kernel Functions}.\\}

After encoding the classical data vectors $\mathbi{x}$ and $\mathbi{x}'$ into the quantum states $\ket{\psi_\mathbi{x}}$ and $\ket{\psi_{\mathbi{x}'}}$ respectively, the quantum kernel function~$\kappa$ computes the correlation between the states which can then be used to populate the GP covariance matrix $\mathbi{C}_{\boldsymbol{\theta}} (\mathbi{x}, \mathbi{x}')$. % of a GP. 
A %notable example of the 
representative quantum kernel is the fidelity kernel $\kappa_{\mathcal{F}}$ \cite{zanardi2006ground} derived from the fidelity measure $\mathcal{F}$, % that follows, % commonly used in quantum computing. It is expressed as,
\begin{eqnarray}\label{eqn:fidelityker}
    \kappa_{\mathcal{F}}: \mathcal{H} \times \mathcal{H} \rightarrow [0,1] = \vert\braket{\psi_{\mathbi{x}}|\psi_{\mathbi{x}'}}\vert^2,
\end{eqnarray}
where $\kappa_{\mathcal{F}}=1$ %value of 1 symbolizes 
shows complete overlap between two quantum states %whereas 0 
and $\kappa_{\mathcal{F}}=0$ indicates that the quantum states are orthogonal. There exists mathematical relevance between \eqref{eqn:genkernel},~\eqref{eqn:fidelityker}, and the quantum kernels serving as covariance functions in QGP training. %It is worth noting the mathematical relevance between \eqref{eqn:genkernel} and~\eqref{eqn:fidelityker}. %These 
%{\color{red} The} quantum kernels serve as covariance functions in QGP training.

% \subsubsection{\textbf{Quantum Gaussian Process}.\\}
% Quantum kernels are utilized as covariance functions for training a Quantum Gaussian Process. The increase in expressivity of quantum kernels owing to the classically intractable quantum Hilbert space allows for creating models with superior predictive performance. However, increasing the size of the training data set adversely affects the performance of Quantum GP.

\subsection{Distributed Gaussian Processes (DGPs)}
In classical GP training~\eqref{eqn:gpopt},
%GP is a powerful modeling methodology; however, its limitation lies in scalability. From Equation (\ref{eqn:gpopt}), 
each optimization round entails time complexity of %iteration of GP training 
$\mathcal{O}(N^3)$, due to the computation of covariance matrix inverse $\mathbi{C}^{-1}_{\boldsymbol{\theta}}$. %We also need to store 
Additionally, storing $\mathbi{C}^{-1}_{\boldsymbol{\theta}}$ and $N$ dataset size requires $\mathcal{O}(N^2 + DN)$ space complexity. %, where $D$ is the input dimensionality. 
The high computational and memory demands make classical %and Quantum 
GPs impractical for large-scale, real-world applications. DGP \cite{deisenroth2015distributed} %solves the large dataset scalability problem 
addresses the scalability bottleneck by distributing both computation and storage across multiple agents, under the assumption of local dataset independence.
\begin{assumption}\label{assumption:local}
    All local datasets represent distinct areas %, and hence 
    with local %ly learned agent models 
    models being statistically independent. 
\end{assumption}
Instead of computing the large covariance inverse $ \mathbi{C}^{-1}_{\boldsymbol{\theta}}$, a DGP approximation yields, $\mathbi{C}^{-1}_{\boldsymbol{\theta}} \approx \mathrm{diag}(\mathbi{C}^{-1}_{\boldsymbol{\theta}, 1}, \ldots, \mathbi{C}^{-1}_{\boldsymbol{\theta}, M})$,
% \begin{eqnarray*}
%     \begin{bmatrix} \mathbi{C}^{-1}_{\boldsymbol{\theta}} \end{bmatrix} \rightarrow 
% \begin{bmatrix}
% \mathbi{C}^{-1}_{\boldsymbol{\theta}_1, 1} & & & \\
% & \mathbi{C}^{-1}_{\boldsymbol{\theta}_2, 2} & & \\
% & & \ddots & \\
% & & & \mathbi{C}^{-1}_{\boldsymbol{\theta}_M, M}
% \end{bmatrix}
% \end{eqnarray*}
where $\mathbi{C}^{-1}_{\boldsymbol{\theta}} \in \mathbb{R}^{N \times N}$ and $\mathbi{C}^{-1}_{\boldsymbol{\theta}, m} \in \mathbb{R}^{N_m \times N_m}$ for all agents $m \in [1, M]$. % with M agents. % and $\boldsymbol{\theta}_m$ as the hyperparameter vector trained on agent $m$'s local dataset. 
Hence, by virtue of Assumption \ref{assumption:local}, DGPs can decompose the optimization problem over dataset $\mathcal{D}$ into a distributed optimization problem over local datasets $\mathcal{D}_m$.~
%are essentially converting an optimization process on the global dataset $\mathcal{D}$, giving $\boldsymbol{\theta}$ to several optimization processes on local datasets $\mathcal{D}_m$ outputting the local hyperparameter vectors $\boldsymbol{\theta}_m$ for $m \in [1, M]$.
%Centralized GP training techniques are formulated for distributed GP based on the alternating direction method of multipliers (ADMM). 
DGP training methods are also formulated using the multi-agent alternating direction method of multipliers (ADMM)~\cite{boyd2011admm,chang2014multi}. The analytical proximal GP (apx-GP) \cite{xie2019distributed} training employs a first-order approximation on the local log-likelihood function $\mathcal{L}_m$ under the assumption of Lipschitz continuity,
\begin{assumption}\label{assumption:L}
    The function $\mathcal{L}_m: \mathbb{R^N} \rightarrow \mathbb{R}$ is Lipschitz continuous with a positive parameter $L > 0$ if,
    \begin{eqnarray*}
        {{\Vert \nabla \mathcal{L}_m(\mathbi{x}) - \nabla \mathcal{L}_m(\mathbi{y}) \Vert}_2 \leq L {\Vert \mathbi{x} - \mathbi{y} \Vert}_2}, \hspace{1cm} \forall \mathbi{x}, \mathbi{y} \in \mathbb{R^N}.
    \end{eqnarray*}
\end{assumption}
The main idea behind apx-GP and other ADMM-based GP training algorithms \cite{kontoudis2024scalable} is that every agent $m$ is allowed to have an opinion on its hyperparameter vector $\boldsymbol{\theta}_m$; however, once the optimization is complete, they should agree on a global consensus vector $\mathbi{z}$. The optimization scheme of apx-GP %'s $s^{th}$ iteration involves the following steps,
yields,
\begin{subequations}
\begin{alignat}{3}
    \mathbi{z}^{(s+1)} &= \frac{1}{M} \sum^{M}_{m=1} \left( \boldsymbol{\theta}^{(s)}_{m} + \frac{1}{\rho}  \boldsymbol{\psi}^{(s)}_{m} \right) \label{eqn:admmz}\\
    \boldsymbol{\theta}^{(s+1)}_{m} &= \mathbi{z}^{(s+1)} - \frac{1}{\rho + L_{m}} \left( \nabla_{\boldsymbol{\theta}} \mathcal{L} (\mathbi{z}^{(s+1)}) + \boldsymbol{\psi}^{(s)}_{m} \right)\label{eqn:admmt}\\
    \boldsymbol{\psi}^{(s+1)}_{m} &= \boldsymbol{\psi}^{(s)}_{m} + \rho \left( \boldsymbol{\theta}^{(s+1)}_{m} - \mathbi{z}^{(s+1)} \right)\label{eqn:admmp}
\end{alignat}
\label{eq:apxGP}%
\end{subequations}
where $\rho > 0$ is the parameter promoting the consensus between all %the 
$\boldsymbol{\theta}_{m}$ and $\mathbi{z}$, $L_m$ is a positive Lipschitz constant for each agent $m$, $\nabla_{\boldsymbol{\theta}} \mathcal{L} (\mathbi{z}^{(s+1)})$ is the gradient of marginal log-likelihood function %with respect to the hyperparameter vector 
with respect to $\mathbi{z}^{(s+1)}$, and $\boldsymbol{\psi}_m$ is the dual variable vector for each agent $m$. The reduced time and space complexity of apx-GP is $\mathcal{O}(N_m)=\mathcal{O}(N^3/M^3)$ and $\mathcal{O}(N^2/M^2 + D(N/M))$ respectively.

\subsection{Problem Statement}
While DGPs effectively alleviate the scalability limitations of standard GPs, their predictive performance remains constrained by the limited expressivity of classical kernels. %is still restricted by the expressive power of classical kernels. 
Quantum kernels, on the contrary, can capture subtle correlations that are inaccessible to classical kernels by leveraging the exponentially large Hilbert space for data mapping. %Therefore, in this work, we 
Motivated by this, our work focuses on scaling QGPs to enable learning in multi-agent systems. %This requires addressing the following problems.

\begin{figure*}[!t]
	\includegraphics[width=\textwidth]{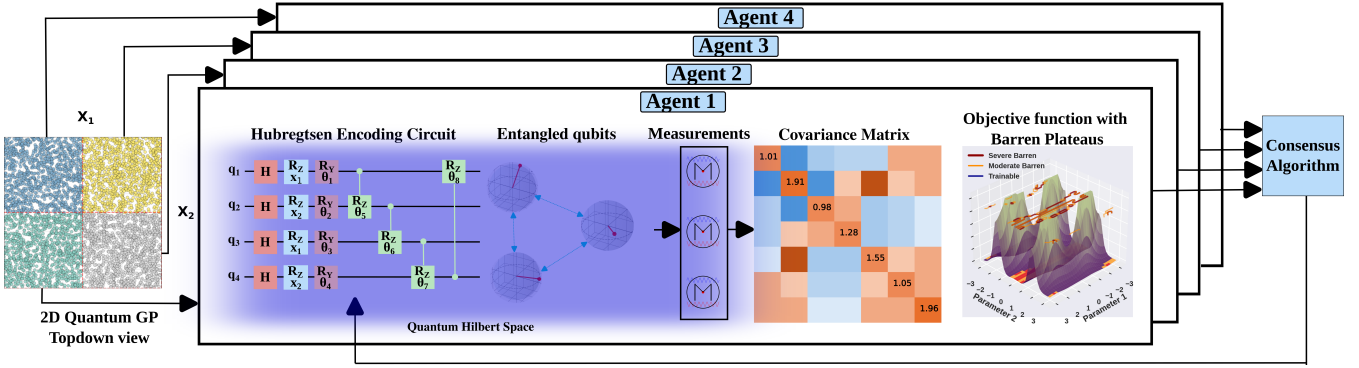}
	\centering
	\caption{The structure of the proposed DQGP with 4 agents. The consensus algorithm is the proposed DR-ADMM optimization.}
	\label{fig:approach}
\end{figure*}

\begin{problem}
Develop a \textbf{Distributed consensus Riemannian ADMM (DR-ADMM)} approach for optimizing quantum kernel hyperparameters across multiple agents. 
\end{problem}
\begin{problem}
Formulate a \textbf{Distributed Quantum Gaussian Process (DQGP)} algorithm that simultaneously overcomes the expressivity limitations of DGPs and the scalability challenges of QGPs.
\end{problem}

% \begin{problem} \label{problem3}
% Enhance the predictive capabilities of Gaussian processes training (solving Equation (\ref{eqn:gpopt}) under Assumption \ref{assumption:L}) with independent local agents (Assumption \ref{assumption:local}) using distributed quantum GP.
% \end{problem}

%%%%%%%%%%%%%%%%%%%%%%%%%%%%%%%%%%%%%%%%%%%%%%%%%%%%%%%%%%%%%%%%%%%%%%%%

\section{Proposed Methodology} \label{sec3}
In this section, we present the proposed methodologies Distributed consensus Riemannian ADMM (DR-ADMM) and Distributed Quantum Gaussian Process (DQGP).
\subsection{Distributed consensus Riemannian ADMM}
The ADMM-based consensus problem~\eqref{eq:apxGP} %Equations \eqref{eqn:admmz}, \eqref{eqn:admmt}, and \eqref{eqn:admmp} 
accounts for GP training with classical kernel hyperparameters that operate in a Euclidean parameter space. For quantum kernels, the hyperparameters are mostly rotational and thus lie in a more complex parameter space, a Riemannian manifold~\cite{lee2018introduction}. To address this, we first define the manifold in which the quantum hyperparameters exist and then describe the distributed consensus Riemannian ADMM algorithm. In~\cite{li2022riemannian}, a Riemannian ADMM algorithm for solving nonconvex problems in centralized topologies is introduced.

\subsubsection{Manifold Definition and Operations}
The torus manifold for quantum circuit parameters is expressed as~$\mathcal{T}^P = \mathcal{S}^1 \times \mathcal{S}^1 \times \dots \times \mathcal{S}^1$, % \text{($P$ times for $\#$ parameters)}$, 
where $\mathcal{S}^1$ is the circle manifold for rotational parameters $\theta_p \in [0, \pi]$ for all $p = 1, 2, \ldots, P$. A combination of rotational quantum gates with parameters in $[0, \pi]$ ensures full-coverage of Bloch sphere~\cite{glendinning2005bloch}. Our choice of $[0, \pi]$ narrows the search space without sacrificing the expressiveness associated with the standard $[0, 2\pi]$ range. Subsequent operations can be constructed over the torus manifold, % as,
\begin{align*}
    \text{\textit{Manifold Projection}}&: \, \Pi_{\mathcal{T}}(\theta) = \theta \bmod \pi,\\
    \text{\textit{Riemannian Distance}}&: \, d_{\mathcal{T}}(\theta_p, \theta_r) = \|\mathcal{W}(\theta_p - \theta_r)\|_2, \\
    %&\quad \mathcal{W}(\theta) = \left[\left(\theta + \frac{\pi}{2}\right) \bmod \pi\right] - \frac{\pi}{2},\\
    \textit{Retraction} &: \, \mathcal{R}_{\theta_{p}}(\theta_r) = \Pi_{\mathcal{T}}(\theta_p + \theta_r),\\
    \textit{Logarithmic Map} &: \, \text{Log}_{\theta_{p}}(\theta_r) = \Pi_{\mathcal{T}}(\theta_r - \theta_p),\\
    \textit{Vector Transport} &: \, \Gamma_{\theta_p \to \theta_r}(\mathbi{v}) = \mathbi{v},\\
    \textit{Inner Product}&: \, \inpt{\boldsymbol{\theta_p}}{\boldsymbol{\theta_r}} =  \inp{\mathcal{W}({\boldsymbol{\theta_p}})}{\mathcal{W}(\boldsymbol{\theta_r})},
\end{align*}
where $\mathcal{W}(\theta) = \left[\left(\theta + {\pi}/{2}\right) \bmod \pi\right] - {\pi}/{2}$. Vector transport operation transports the tangent vector $\mathbi{v}$~from point $\theta_p$ to $\theta_r$, which is an identity on a torus manifold. As the torus manifold is locally flat, Riemannian gradient is identical to the Euclidean gradient on a torus, i.e., $\text{grad}_{\mathcal{T}} f(\theta) = \nabla f(\theta)$. Moreover, we assume,
\begin{assumption}\label{assumption:bounded}
    All operations and gradients corresponding to the torus manifold are uniformly bounded.
\end{assumption}

\subsubsection{Quantum Gaussian Process Loss Function}
Let us recast the negative marginal log-likelihood function~\eqref{eqn:LL} for QGPs as,
%From Equation \eqref{eqn:LL}, we can 
\begin{equation*}
\mathcal{L}_{\mathrm{Q}}(\boldsymbol{\theta}) = \frac{1}{2} \Big( \mathbi{y}^\intercal \boldsymbol{\kappa}^{-1}_{\mathrm{Q}}(\mathbi{X},\mathbi{X}\vert \boldsymbol{\theta}) \mathbi{y} + \log \vert \boldsymbol{\kappa}_{\mathrm{Q}}(\mathbi{X},\mathbi{X}\vert \boldsymbol{\theta}) \vert + N \log(2\pi) \Big),
\end{equation*}
where $\boldsymbol{\kappa}_{\mathrm{Q}}(\mathbi{X},\mathbi{X}\vert \boldsymbol{\theta})$ is the quantum kernel matrix with intrinsic measurement noise. The derivatives of the quantum kernel with respect to the individual hyperparameter are computed using the \textit{parameter shift rule} \cite{wierichs2022general}, as the quantum gates are mostly rotational,
\begin{eqnarray*}
    \frac{\partial [\boldsymbol{\kappa}_{Q}]_{ij}}{\partial \theta_p} = \frac{ [\boldsymbol{\kappa}_{Q}]_{ij}(\boldsymbol{\theta} + \delta \mathbi{e}_p) -  [\boldsymbol{\kappa}_{Q}]_{ij}(\boldsymbol{\theta} - \delta \mathbi{e}_p)}{2\delta},
\end{eqnarray*}
where $\delta$ is the shift value and $\mathbi{e}_p$ is the $p$-th unit vector. Then, from~\eqref{eqn:gradient} the Quantum NLL loss gradient is described as,
\begin{eqnarray*}
\nabla_{\boldsymbol{\theta}}\mathcal{L}_{\mathrm{Q}}(\boldsymbol{\theta}) = \frac{1}{2} \sum_{p=1}^P \text{Tr}\left\{\left(\boldsymbol{\kappa}_{\mathrm{Q}}^{-1} - \boldsymbol{\kappa}_{\mathrm{Q}}^{-1}\mathbi{y}\mathbi{y}^T \boldsymbol{\kappa}_{\mathrm{Q}}^{-1}\right) \frac{\partial \boldsymbol{\kappa}_{\mathrm{Q}}}{\partial \theta_p}\right\}.
\end{eqnarray*}

% \textbf{Efficient Gradient Computation:}
% \begin{equation}
% \frac{\partial \mathcal{L}}{\partial \theta_k} = \frac{1}{2} \sum_{i,j} \left(C^{-1}_{ij} - (C^{-1}y)_i(C^{-1}y)_j\right) \frac{\partial C_{ij}}{\partial \theta_k}
% \end{equation}

\subsubsection{Distributed consensus Riemannian ADMM (DR-ADMM)}
The distributed optimization problem can be formulated as minimizing the sum of local QGP loss function across all agents, 
while enforcing a consensus constraint and assuming $L_p$-smooth cost functions, % that aligns the local model parameters with a shared global parameter,
$\min_{\{\boldsymbol{\theta}_m\}} \sum_{m=1}^M \mathcal{L}_{\mathrm{Q},m}(\boldsymbol{\theta}_m)$ subject to $\boldsymbol{\theta}_m = \mathbi{z}, \, \forall m$.
% \begin{align*}
%     \min_{\{\boldsymbol{\theta}_m\}} \sum_{m=1}^M \mathcal{L}_{\mathrm{Q},m}(\boldsymbol{\theta}_m) \quad \text{subject to } \boldsymbol{\theta}_m = \mathbi{z}, \, \forall m.
% \end{align*}
\begin{assumption}\label{assumption:L_psmooth}
    $\mathcal{L}_{Q,m}$ is ~$L_p$-smooth, i.e., $\mathcal{L}_{Q,m}(\boldsymbol{z}^{(s+1)}) - \mathcal{L}_{Q,m}(\boldsymbol{z}^{(s)}) \\ \le \inpt{\nabla \mathcal{L}_{Q,m}(\boldsymbol{z}^{(s)})}{\boldsymbol{z}^{(s+1)} - \boldsymbol{z}^{(s)}} + \frac{L_p}{2}\mathrm{d}_{\mathcal{T}}^2(\boldsymbol{z}^{(s+1)}, \boldsymbol{z}^{(s)}), \ \forall m \in [1,M]$.
\end{assumption}

To solve the distributed optimization problem, we construct an augmented Lagrangian function defined on a Riemannian manifold,
\begin{align*}
    \mathbb{L}_\rho(\boldsymbol{\theta}_m, \mathbi{z}, \boldsymbol{\psi}_m) &= \sum_{m=1}^M \bigg( \mathcal{L}_{\mathrm{Q},m}(\boldsymbol{\theta}_m) + \boldsymbol{\psi}_m^\intercal\text{Log}_\mathbi{z}(\boldsymbol{\theta}_m) + \frac{\rho}{2}\|\text{Log}_\mathbi{z}(\boldsymbol{\theta}_m)\|^2 \bigg).
\end{align*}

\begin{algorithm}[!t]
\caption{DR-ADMM}
\label{alg:DR-ADMM}
\begin{algorithmic}[1]
\Statex \textbf{Input:}
$\{\mathcal{D}_m=[\mathbi{X}_m, \mathbi{y}_m], \tilde{\boldsymbol{\theta}}^{(s)}_m=[(\boldsymbol{\theta}^{(s)}_m)^\intercal, \sigma_{\epsilon, m}^{2(s)}]^\intercal,$
\Statex \qquad \qquad $\tilde{\boldsymbol{\psi}}^{(s)}_m=[(\boldsymbol{\psi}^{(s)}_m)^\intercal, \psi^{(s)}_{\sigma^2_{\epsilon, m}}]^\intercal\}_{m=1}^{M}$, $\kappa_Q$, $\delta$, $\rho$, $\mathbi{L}$.% Let $\tilde{\boldsymbol{\theta}} = [\boldsymbol{\theta}^\top, \sigma^2]^\top \in \mathcal{M}$, where $\mathcal{M} = \mathcal{T}^P \times [\sigma^2_{\min}, 1]$ is the product manifold.
\Statex \textbf{Output:} 
$\tilde{\mathbi{z}}^{(s+1)}$, $\{\tilde{\boldsymbol{\theta}}^{(s+1)}_m\}^{M}_{m=1}$, $\{\tilde{\boldsymbol{\psi}}^{(s+1)}_m\}^{M}_{m=1}$, $\|\mathbi{r}_{\mathrm{pri}}^{(s)}\|_{2}$, $\|\mathbi{r}_{\mathrm{dual}}^{(s)}\|_{2}$
\State $\{\tilde{\boldsymbol{\varphi}}^{(s)}_m\}^M_{m=1} = \{\tilde{\boldsymbol{\theta}}_m^{(s)}\}^M_{m=1} + \frac{\{\tilde{\boldsymbol{\psi}}_m^{(s)}\}^M_{m=1}}{\rho}$
\State $\mathbi{z}_{\boldsymbol{\theta}}^{(s+1)} = \frac{1}{2} \Pi_{\mathcal{T}} \left(\text{atan2}\left[ \sum\limits_{m=1}^{M} \sin\left(2\boldsymbol{\varphi}_{\boldsymbol{\theta},m}^{(s)}\right), \sum\limits_{m=1}^{M} \cos\left(2\boldsymbol{\varphi}_{\boldsymbol{\theta},m}^{(s)}\right) \right]\right)$ \label{alg:circmean}
\State $z_{\sigma^2}^{(s+1)} = \frac{1}{M} \sum\limits_{m=1}^{M} \varphi_{\sigma^2_{\epsilon, m}}^{(s)}, \quad \tilde{\mathbi{z}}^{(s+1)} = [(\mathbi{z}_{\boldsymbol{\theta}}^{(s+1)})^\intercal, z_{\sigma^2_{\epsilon}}^{(s+1)}]^\intercal$
\For{$m = 1, \ldots, M$ \textbf{in PARALLEL}}
    \State Compute quantum kernel matrix: $[\boldsymbol{\kappa}_{Q}]_{m}(\mathbi{X}_m, \mathbi{X}_m | \mathbi{z}_{\boldsymbol{\theta}}^{(s+1)})$
    \For{$p = 1, 2, \dots, P$ \textbf{in PARALLEL}}
        \State Perform kernel evaluations $[\boldsymbol{\kappa}_{Q}]_m(\boldsymbol{\theta}^{(s)}_m \pm \delta \mathbi{e}_p | \mathbi{z}_{\boldsymbol{\theta}}^{(s+1)})$
    \EndFor
    \State Compute derivatives: $\frac{\partial [\boldsymbol{\kappa}_{Q}]_{m}}{\partial \boldsymbol{\theta}}$ and $\frac{\partial \mathcal{L}_{\mathrm{Q},m}}{\partial \sigma_\epsilon^2}$
    \State Compute local gradient: $\nabla_{\tilde{\boldsymbol{\theta}}}\mathcal{L}_{\mathrm{Q},m}(\tilde{\mathbi{z}}^{(s+1)})$
    \State Update: $\tilde{\boldsymbol{\theta}}_m^{(s+1)} = \mathcal{R}_{\tilde{\mathbi{z}}^{(s+1)}}\left(-\frac{\nabla_{\tilde{\boldsymbol{\theta}}}\mathcal{L}_{\mathrm{Q},m}(\tilde{\mathbi{z}}^{(s+1)}) + \tilde{\boldsymbol{\psi}}_m^{(s)}}{\rho + L_m}\right)$
    \State Update: $\tilde{\boldsymbol{\psi}}_m^{(s+1)} = \tilde{\boldsymbol{\psi}}_m^{(s)} + \rho \text{Log}_{\tilde{\mathbi{z}}^{(s+1)}}(\tilde{\boldsymbol{\theta}}_m^{(s+1)})$
\EndFor
\State Compute residuals: $\|\mathbi{r}_{\mathrm{pri}}^{(s)}\|_{2}$, $\|\mathbi{r}_{\mathrm{dual}}^{(s)}\|_{2}$
\end{algorithmic}\label{alg:dradmm}
\end{algorithm}

Subsequently, the optimization scheme of the DR-ADMM yields,
\begin{subequations}
\begin{alignat}{3}
    \mathbi{z}^{(s+1)} = \underset{w \in \mathcal{T}^P}{\arg\min} \sum_{m=1}^{M} d_{\mathcal{T}}^2 \left(w, \boldsymbol{\theta}_m^{(s)} + \frac{\boldsymbol{\psi}_m^{(s)}}{\rho} \right)\label{eqn:dradmmz}\\
    \boldsymbol{\theta}_m^{(s+1)} = \mathcal{R}_{\mathbi{z}^{(s+1)}}\left(-\frac{\nabla_{\boldsymbol{\theta}}\mathcal{L}_{\mathrm{Q},m}\left(\mathbi{z}^{(s+1)}\right) + \boldsymbol{\psi}_m^{(s)}}{\rho + L_m}\right)\label{eqn:dradmmt}\\
    \boldsymbol{\psi}_m^{(s+1)} = \boldsymbol{\psi}_m^{(s)} + \rho  \text{Log}_{\mathbi{z}^{(s+1)}}\left(\boldsymbol{\theta}_m^{(s+1)}\right),\label{eqn:dradmmp}
\end{alignat}
\label{eq:dradmm}%
\end{subequations}
where $\mathbi{z}^{(s+1)}$ is the global consensus parameter, $\boldsymbol{\psi}_m^{(s+1)}$ the dual variable, $\nabla_{\boldsymbol{\theta}}\mathcal{L}_{\mathrm{Q},m}(\mathbi{z}^{(s+1)})$ the Riemannian gradient of local loss at $\mathbi{z}^{(s+1)}$, $L_m$ the Lipschitz constant for agent $m$, and $\mathcal{R}_{\mathbi{z}^{(s+1)}}$ the retraction operator from global parameter $\mathbi{z}^{(s+1)}$. \eqref{eqn:dradmmz} uses the simplified Karcher mean~\cite{lim2012matrix} to compute $\mathbi{z}^{(s+1)}$ in non-Euclidean fashion. The optimization of DR-ADMM~\eqref{eq:dradmm} iterates until both the primal and the dual residuals fall below the predefined tolerance thresholds $\|\mathbi{r}_{\mathrm{pri}}^{(s)}\|_2 \leq \epsilon_{\mathrm{pri}}$ and $\|\mathbi{r}_{\mathrm{dual}}^{(s)}\|_2 \leq \epsilon_{\mathrm{dual}}$, respectively. The primal and dual residuals are computed as $\mathbi{r}_{\mathrm{pri}}^{(s)} = [ d_{\mathcal{T}}(\boldsymbol{\theta}_1^{(s)}, \mathbi{z}^{(s)}) \ \hdots \ d_{\mathcal{T}}(\boldsymbol{\theta}_M^{(s)}, \mathbi{z}^{(s)})]^{\intercal}$ and  $\mathbi{r}_{\mathrm{dual}}^{(s)} = \rho  d_{\mathcal{T}}(\mathbi{z}^{(s)}, \mathbi{z}^{(s-1)})$, respectively. The implementation details of DR-ADMM are presented in Algorithm~\ref{alg:dradmm}. In $\mathcal{T}^P$, the Karcher mean for $\mathbi{z}^{(s+1)}$ can be reduced to the circular mean (Alg.~\ref{alg:dradmm}-line~\ref{alg:circmean}). The noise hyperparameter $\sigma^2_\epsilon$ reaches consensus using the standard ADMM with Euclidean mean. The learned noise hyperparameter represents the observation noise and ensures that the quantum kernel matrix remains positive-definite and numerically stable for matrix inversion. %Next, we present the theorem stating the convergence of DR-ADMM with quantum hyperparameters $\boldsymbol{z}$.}

\begin{theorem}[%Global 
Convergence of DR-ADMM] 
Let the negative marginal log-likelihood functions $\mathcal{L}_{Q,m}: \mathcal{T}^p \to \mathbb{R}$ be $L_p$-smooth $\forall m \in [1,M]$ on the torus manifold $\mathcal{T}^p$ with bounded projections $\proj(\cdot)$, and assume the existence of a uniform bound $C < \infty$ such that $\normt{\nabla \mathcal{L}_{Q,m}(\cdot)} \leq C$. Then, for a sufficiently large penalty parameter $\rho > 0$, the sequence $\{\boldsymbol{\theta}_m^{(s)}, \boldsymbol{\psi}_m^{(s)}, \mathbi{z}^{(s)}\}$ generated by $s$ iterations of the Distributed consensus Riemannian ADMM algorithm converges %globally 
to a stationary point $(\boldsymbol{\theta}^*_m, \boldsymbol{\psi}^*_m, \boldsymbol{z}^*)$ that satisfies the KKT conditions for the consensus problem, characterized by: 
\begin{itemize} 
\item \textbf{Primal Feasibility:} The primal residuals vanish, \\$\lim_{s \to \infty} \dist(\boldsymbol{\theta}_m^{(s)}, \mathbi{z}^{(s)}) = 0$, yielding $\boldsymbol{\theta}_m^* = \mathbi{z}^* \, \forall m$. 
\item \textbf{Dual Feasibility and Stationarity:} The dual residuals vanish, $\lim_{s \to \infty} \rho \normt{\mathbi{z}^{(s)} - \mathbi{z}^{(s-1)}} = 0$, and the gradient of the negative marginal log-likelihood functions with respect to the consensus variable is zero, i.e., %resulting in the stationary condition
$\sum_{m=1}^M \nabla \mathcal{L}_{Q,m}(\mathbi{z}^*) = 0$.
\item \textbf{Convergence Rate:} The algorithm achieves a sublinear convergence rate of $\mathcal{O}(1/S)$, where $S$ is the total number of iterations. To achieve $\xi$-accuracy in consensus residual, the required number of iterations is $S = \mathcal{O}\left(\frac{(\rho + L_{max})(V^{(0)} - V^*)}{\xi}\right)$, where $L_{max} = \max_m L_m$, and $V^* = \inf_s V^{(s)}$ is the optimal value.
\end{itemize} 
\end{theorem}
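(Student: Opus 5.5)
We follow the standard Lyapunov (potential-function) analysis of nonconvex consensus ADMM, in the style of Hong--Luo--Razaviyayn and apx-GP, transported to the torus $\mathcal{T}^P$. The torus is flat, vector transport is the identity, and the Riemannian gradient equals the Euclidean one. So every step can be written in the local chart given by $\mathcal{W}$, where $\text{Log}$, $\mathcal{R}$ and $\dist$ act like Euclidean subtraction, addition and norm. The potential is the augmented Lagrangian $V^{(s)} := \mathbb{L}_\rho(\boldsymbol{\theta}_m^{(s)}, \mathbi{z}^{(s)}, \boldsymbol{\psi}_m^{(s)})$. The first task is to show that $V^{(s)}$ is bounded below. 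Assumption~\ref{assumption:bounded} and the gradient bound $C$ give that $\mathcal{L}_{Q,m}$ is bounded on the compact torus. The dual terms $\boldsymbol{\psi}_m^\intercal \text{Log}_{\mathbi{z}}(\boldsymbol{\theta}_m)$ are controlled via the identity derived next. Hence $V^* = \inf_s V^{(s)} > -\infty$.

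The key step links the dual variables to gradients. The optimality condition of the linearized update \eqref{eqn:dradmmt}, written in the chart, is
$$\nabla \mathcal{L}_{Q,m}(\mathbi{z}^{(s+1)}) + \boldsymbol{\psi}_m^{(s)} + (\rho+L_m)\text{Log}_{\mathbi{z}^{(s+1)}}(\boldsymbol{\theta}_m^{(s+1)}) = 0.$$
Combined with \eqref{eqn:dradmmp}, this gives
$$\boldsymbol{\psi}_m^{(s+1)} = -\nabla \mathcal{L}_{Q,m}(\mathbi{z}^{(s+1)}) - L_m \text{Log}_{\mathbi{z}^{(s+1)}}(\boldsymbol{\theta}_m^{(s+1)}).$$
Two consequences follow. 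First, $\|\boldsymbol{\psi}_m^{(s+1)}\|$ is bounded by $C$ plus a bounded term. Second, $\|\boldsymbol{\psi}_m^{(s+1)}-\boldsymbol{\psi}_m^{(s)}\|^2$ is bounded by a constant times $L_p^2\dist^2(\mathbi{z}^{(s+1)},\mathbi{z}^{(s)})$ plus consensus-residual terms, using Assumption~\ref{assumption:L_psmooth}.

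With this in hand, we establish sufficient descent by splitting $V^{(s+1)}-V^{(s)}$ into the three block updates. For the $\mathbi{z}$-step, \eqref{eqn:dradmmz} is an exact minimizer of a function that is $\rho$-strongly geodesically convex in the chart. This yields a decrease of at least $\frac{\rho M}{2}\dist^2(\mathbi{z}^{(s+1)},\mathbi{z}^{(s)})$, valid as long as the points stay within an injectivity radius; this is where bounded projections are used. For the $\boldsymbol{\theta}$-step, $L_p$-smoothness together with the proximal linearization with weight $\rho+L_m$ gives a decrease of order $\frac{\rho}{2}\dist^2(\boldsymbol{\theta}_m^{(s+1)},\boldsymbol{\theta}_m^{(s)})$. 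For the dual step, $V$ increases by exactly $\frac{1}{\rho}\|\boldsymbol{\psi}_m^{(s+1)}-\boldsymbol{\psi}_m^{(s)}\|^2$, which the previous paragraph bounds by $\frac{c L_p^2}{\rho}(\cdot)$. Choosing $\rho$ large enough that, say, $\rho \ge 2cL_p^2/\rho$, so $\rho = \Omega(L_p)$, gives
$$V^{(s)}-V^{(s+1)} \ge c_1\sum_m\big(\dist^2(\boldsymbol{\theta}_m^{(s+1)},\boldsymbol{\theta}_m^{(s)}) + \dist^2(\mathbi{z}^{(s+1)},\mathbi{z}^{(s)})\big).$$

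Telescoping then gives summability of the squared increments. Hence $\dist(\mathbi{z}^{(s)},\mathbi{z}^{(s-1)})\to0$, which is the dual residual. The $\boldsymbol{\psi}$ increments also vanish, and by \eqref{eqn:dradmmp} $\text{Log}_{\mathbi{z}^{(s+1)}}(\boldsymbol{\theta}_m^{(s+1)})\to0$, which is primal feasibility. Compactness of $\mathcal{T}^P$ gives a limit point. Passing to the limit in the identity for $\boldsymbol{\psi}_m$ yields $\boldsymbol{\psi}_m^* = -\nabla\mathcal{L}_{Q,m}(\mathbi{z}^*)$. The $\mathbi{z}$-optimality condition in the limit gives $\sum_m\boldsymbol{\psi}_m^*=0$, hence $\sum_m\nabla\mathcal{L}_{Q,m}(\mathbi{z}^*)=0$. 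For the rate, summing the descent over $s=0,\dots,S-1$ gives
$$\min_{s<S}\text{(residual}^2\text{)} \le \frac{V^{(0)}-V^*}{c_1 S}.$$
Tracking constants through the $(\rho+L_m)$ step size yields $S = \mathcal{O}\big((\rho+L_{max})(V^{(0)}-V^*)/\xi\big)$.

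The main obstacle is the $\mathbi{z}$-step. The circular or Karcher mean on the torus is only well defined and strongly convex when the points $\boldsymbol{\varphi}_m$ lie in a common half-circle, because of the wrap-around in $\mathcal{W}$. Without that condition the chart identities fail. We would therefore first argue that, for large $\rho$, all iterates stay in such a neighbourhood after finitely many steps. This uses the bounds $\|\boldsymbol{\psi}_m\|/\rho \le (C+L_m\pi)/\rho$ and the fact that $\boldsymbol{\theta}_m$ is a retraction from $\mathbi{z}$ of length at most $\mathcal{O}((C+\|\boldsymbol{\psi}\|)/\rho)$. Once this holds, the whole analysis reduces to Euclidean consensus ADMM. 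We also note that convergence of the full sequence, not just its limit points, would require a Kurdyka--Łojasiewicz argument. The likelihood is analytic in the parameters, which makes that argument plausible, but we would state the stationarity result for limit points first.
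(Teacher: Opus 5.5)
Your argument is sound at the level of a sketch, but it follows the Hong--Luo--Razaviyayn augmented-Lagrangian route rather than the paper's.

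The paper's potential evaluates every local loss at the consensus point and adds $\frac{1}{2\rho}\sum_m\normt{\boldsymbol{\psi}_m^{(s)}}^2$. Completing the square in the chart gives $V^{(s)}=\sum_m\mathcal{L}_{Q,m}(\mathbi{z}^{(s)})+\frac{\rho}{2}\sum_m\dist^2(\mathbi{z}^{(s)},\boldsymbol{\theta}_m^{(s)}+\boldsymbol{\psi}_m^{(s)}/\rho)$, i.e.\ the loss at the linearization point plus the objective of the next $\mathbi{z}$-subproblem. The loss therefore enters only through its values at consecutive $\mathbi{z}$-iterates, which is exactly what Assumption~\ref{assumption:L_psmooth} controls. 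The descent \eqref{eqn:lyapunovineq} is stated directly in the stationarity/feasibility residual $\normt{\nabla\mathcal{L}_{Q,m}(\mathbi{z}^{(s)})+\boldsymbol{\psi}_m^{(s)}}^2+\dist^2(\boldsymbol{\theta}_m^{(s)},\mathbi{z}^{(s)})$, with weight $|\Lambda(\rho,L_m)|$ of order $1/(\rho+L_m)$. So the KKT limit and the $(\rho+L_{max})$ iteration count are read off by telescoping.

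Your potential $\mathbb{L}_\rho$ instead gives descent in successive differences, with the dual--gradient identity as the key lemma. This uses genuine gradient-Lipschitz continuity, which is harmless under the usual meaning of $L_p$-smooth. Because the $\boldsymbol{\theta}$-step is linearized at $\mathbi{z}^{(s+1)}$ rather than at $\boldsymbol{\theta}_m^{(s)}$, its change in $\mathbb{L}_\rho$ carries an extra term of size $(L_p+L_m)\,\dist(\boldsymbol{\theta}_m^{(s+1)},\mathbi{z}^{(s+1)})\,\dist(\boldsymbol{\theta}_m^{(s+1)},\boldsymbol{\theta}_m^{(s)})$. Absorbing it, together with the $L_m$ part of the dual increment, requires $\rho$ to dominate $L_p+L_{max}$, not merely $\rho=\Omega(L_p)$. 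The consensus rate then needs the conversion $\dist(\boldsymbol{\theta}_m^{(s+1)},\mathbi{z}^{(s+1)})=\normt{\boldsymbol{\psi}_m^{(s+1)}-\boldsymbol{\psi}_m^{(s)}}/\rho$. It produces a bound of a different shape that implies the stated one only loosely.

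In exchange, every step is a standard, checkable Euclidean estimate, and you confront two points the paper's sketch passes over. First, the chart identities hold only inside the injectivity domain. With $\text{Log}$ read literally as $(\cdot)\bmod\pi$, even $\text{Log}_{\mathbi{z}}(\mathcal{R}_{\mathbi{z}}(\mathbi{v}))=\mathbi{v}$ fails for small negative components, so your $\mathcal{W}$-reading is necessary. Second, vanishing squared increments give only limit-point stationarity, so the full-sequence convergence asserted in the statement (and in the paper's sketch) needs your KL step.

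Two small repairs are needed. Run the localization as an induction from a bounded $\boldsymbol{\psi}_m^{(0)}$ (e.g.\ zero) rather than ``after finitely many steps''. The bound $(C+L_m\pi)/\rho$ presupposes the dual identity, hence no wrap-around at the previous step, and it is small only when $\rho\gg L_m$. Also, base KL on o-minimal definability rather than analyticity: with the Mat\'ern outer kernel the loss need not be analytic where projected features of distinct inputs coincide.
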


\textit{Proof:} (Sketch) Under Assumptions \ref{assumption:bounded} and \ref{assumption:L_psmooth}% and \ref{assumption:rho}
, define a Lyapunov function $V$ for the $s^{th}$ iteration with dual variable regularization as,
\begin{align*}
    V^{(s)} = &\sum_{m=1}^{M} \left[ \mathcal{L}_{Q,m}(\mathbi{z}^{(s)}) + \inpt{\boldsymbol{\psi}_m^{(s)}}{\logmap{\mathbi{z}^{(s)}}{\boldsymbol{\theta}_m^{(s)}}} + \frac{\rho}{2} \dist^2(\mathbi{z}^{(s)}, \boldsymbol{\theta}_m^{(s)}) \right] \nonumber\\
    &+ \frac{1}{2\rho} \sum_{m=1}^{M} \normt{\boldsymbol{\psi}_m^{(s)}}^2.
    \end{align*}
Then prove that it is non-increasing, \begin{align}
    V^{(s+1)} - &V^{(s)} \leq - |\Lambda(\rho, L_m)|%\frac{1}{4(\rho + L_m)} \times 
    \nonumber\\
    &\times 
    \sum_{m=1}^M \left[ \normt{\nabla \mathcal{L}_{Q,m}(\mathbi{z}^{(s)}) + \boldsymbol{\psi}_m^{(s)}}^2 + \dist^2(\boldsymbol{\theta}_m^{(s)}, \mathbi{z}^{(s)})\right]. \label{eqn:lyapunovineq}
\end{align}
From \eqref{eqn:lyapunovineq}, show that the optimization variables $\boldsymbol{\theta}_m$, $\boldsymbol{\psi}_m$, $\mathbi{z}$, the primal residual ($\sum_{s=0}^{\infty} \sum_{m=1}^M \dist^2(\boldsymbol{\theta}_m^{(s)}, \mathbi{z}^{(s)})$) and the dual residual ($\sum_{s=1}^{\infty} \rho^2 \normt{\mathbi{z}^{(s)} - \mathbi{z}^{(s-1)}}^2$) series remain uniformly bounded during DR-ADMM iterations. Ultimately, in the limit, these residuals vanish, and the optimization variables converge to $\boldsymbol{\theta}^*_m, \boldsymbol{\psi}^*_m, \boldsymbol{z}^*$. Applying \eqref{eqn:admmt} at these limit points leads to $\sum_{m=1}^M \nabla \mathcal{L}_{Q,m}(\mathbi{z}^*) = 0$. The stated sublinear convergence rate can be proven using \eqref{eqn:lyapunovineq}.

\subsection{Distributed Quantum Gaussian Process}

\begin{algorithm}[!t]
\caption{DQGP}
\label{alg:distributed-qgp}
\begin{algorithmic}[1]
\Statex \textbf{Input:} 
$\mathcal{D} = \{\mathbi{X}, \mathbi{y}\}$, $M$, $\delta$, $\kappa_Q(\Phi(q, \iota))$, $\rho$, $\mathbi{L}$, $F$, $s_{max}$, $T$,  $\epsilon_{\mathrm{pri}}$, $\epsilon_{\mathrm{dual}}$
\Statex \textbf{Output:} 
$\tilde{\mathbi{z}}^*$, $\text{NLPD}_{\text{test}}$, $\text{NRMSE}_{\text{test}}$
\State \textbf{Initialize:} 
$\tilde{\boldsymbol{\theta}}_m^{(0)}, \tilde{\mathbi{z}}^{(0)}, \tilde{\boldsymbol{\psi}}_m^{(0)}$, $\kappa_Q(\tilde{\mathbi{z}}^{(0)})$, $NLPD^*_{CV}=\infty$, $t=0$, $s=0$
\State $\mathcal{D}_{\text{train}}, \mathcal{D}_{\text{test}} \leftarrow \texttt{TrainTestSplit}(\mathcal{D})$
\State $\{\mathcal{D}_{m}\}^{M}_{m=1}$ $\leftarrow$ \texttt{Regional}\texttt{k-dTreeSplit} $(\mathcal{D}_{\text{train}}, M)$ \label{alg:regional}
\While{$s < s_{max}$ and $t<T$}
    \State $\tilde{\mathbi{z}}^{(s+1)}, \{\tilde{\boldsymbol{\theta}}^{(s+1)}_m\}^M_{m=1}, \{\tilde{\boldsymbol{\psi}}^{(s+1)}_m\}^M_{m=1}, \|\mathbi{r}_{\mathrm{pri}}^{(s)}\|_{2}, \|\mathbi{r}_{\mathrm{dual}}^{(s)}\|_{2}$
    \Statex \qquad $\leftarrow$ $\texttt{DR-ADMM}(\{\mathcal{D}_m, \tilde{\boldsymbol{\theta}}_m^{(s)}, \tilde{\boldsymbol{\psi}}^{(s)}_m\}_{m=1}^{M}, \kappa_Q, \delta, \rho, \mathbi{L})$~\label{alg:DR-ADMM in DQGP}
    \State $NLPD_{CV} \leftarrow \texttt{F-fold\_Cross-Validation}(\tilde{\mathbi{z}}^{(s+1)}, \bigcup_m \mathcal{D}_m)$\label{alg:f-fold in DQGP}
    \State \textbf{if} {$\text{NLPD}_{\text{CV}} < \text{NLPD}^*_{\text{CV}}$} \textbf{then}
    \State \quad \ \ $\text{NLPD}^*_{\text{CV}} = \text{NLPD}_{\text{CV}}$, $\tilde{\mathbi{z}}^* = \tilde{\mathbi{z}}^{(s+1)}$, $t = 0$
    \State \textbf{else} $t = t + 1$
    \State \textbf{end if}
    \State \textbf{if} {$\|\mathbi{r}_{\mathrm{pri}}^{(s)}\|_2 \leq \epsilon_{\mathrm{pri}}$ \textbf{and} $\|\mathbi{r}_{\mathrm{dual}}^{(s)}\|_2 \leq \epsilon_{\mathrm{dual}}$} \textbf{then break}
    \State $s=s+1$
\EndWhile
\State NLPD$_{\text{test}}$, NRMSE$_{\text{test}}$ $\leftarrow \texttt{gPoE}(\{\mathcal{D}_m\}_{m=1}^M, \mathcal{D}_{\text{test}}, \tilde{\mathbi{z}}^*)$\label{alg:test pred in DQGP} \Comment{\cite{deisenroth2015distributed}}
\end{algorithmic}
\end{algorithm}

%To effectively leverage the quantum Hilbert space for the classical datasets, it is imperative to develop methodologies that flexibly interface between these domains. Quantum encoding circuits, as introduced earlier, provide a way to map the classical data into the quantum states. Subsequently, the quantum kernel functions are designed to model relations between the encoded quantum states and, equivalently, the original decoded classical data. 

\subsubsection{Quantum Encoding Circuits}
%The formulation of the quantum encoding circuits is an active field of research. 
%In this work, w
We primarily employ %the following encoding circuits based on our research objectives.
% two types of quantum encoding circuits. %tailored to our research objectives.
% \begin{figure}[h]
%   \centering
%   \includegraphics[width=\linewidth]{encoding_circuit_visualizer/yz_cx_encoding.png}
%   \caption{YZ-CX Encoding Circuit with layers $\iota = 2$ and qubits $q = 4$ for $2D \, \mathbi{x} = \{x_1, x_2\}$.}
%   \label{fig:yz_cx}
%   \Description{YZ_CX.}
% \end{figure} 
% \begin{figure}[h]
%   \centering
%   \includegraphics[width=.8\columnwidth]{figures/chebyshev.png}
%   \caption{Chebyshev Parameterized Quantum Circuit with layers $\iota = 1$ and qubits $q = 4$ for $2D \, \mathbi{x} = \{x_1, x_2\}$ resulting in $16$ trainable parameters.}
%   \label{fig:chebyshevPQC}
%   \Description{Chebyshev PQC.}
% \end{figure}
% The first is the Chebyshev parameterized quantum circuit \cite{kreplin2024reduction, williams2023quantum} %is composed 
% which consists of $R_Y, R_X$, and conditional-$R_Z$ quantum rotational gates. %~\eqref{eqn:quantumgates}. % as shown in Figure \ref{fig:chebyshevPQC}.
% The second is 
the Hubregtsen encoding circuit~\cite{hubregtsen2022training}, %alternatively, 
which contains the Hadamard gate $H$, rotation gates $R_Z$, $R_Y$, and conditional-$R_Z$ as shown in Fig.~\ref{fig:approach}. The conditional quantum gates play a crucial role by inducing entanglement among the qubits, whereas the Hadamard gate %gives rise to 
introduces superposition. %effects 
These are essential properties for representing complex quantum systems. %, another critical feature of quantum systems. 
The overall configuration of these circuits depends on %Depending on 
the number of qubits $q$, the number of layers $\iota$, and the dimensionality $D$ of the %classical input data
input space. Accordingly, the encoding circuits have $P$ parameters %denoted as 
$\{\theta_1, \theta_2, \ldots, \theta_P \} \in %\mathbb{R}^P
[0, \pi]^P$. %Next, we focus on the construction of quantum kernel functions. %, which form the foundation of the proposed DQGP framework.

\begin{table*}[!t]
\caption{%Comparing the uncertainty estimation 
NLPD$_{\mathrm{test}}$ and %predictive accuracy 
NRMSE$_{\mathrm{test}}$ %of our approach with other methods using the same dataset configurations and 
for $N=500$. %for the 2D Quantum Gaussian process prior, hyperparameter 
Set 1 $= \{0.58, 2.45, 1.88, 1.40, 0.31, 1.44\}$ and %hyperparameter 
Set 2 $=\{1.18, 2.99, 2.30, 1.88, 0.49, 0.49\}$.}
\centering
\small{\begin{tabular}{c c c c c c c c c c c}
\toprule
  & & \multicolumn{2}{c}{Single Agent Method} & & \multicolumn{6}{c}{Distributed Methods} \\
  \cmidrule(lr){3-4} \cmidrule(lr){6-11}
 {Data} &  Subset & \multicolumn{2}{c}{Full-GP \cite{williams2006gaussian}} & $M$ & \multicolumn{2}{c}{DQGP-DR-ADMM} & \multicolumn{2}{c}{FACT-GP \cite{deisenroth2015distributed}} & \multicolumn{2}{c}{apxGP-ADMM \cite{xie2019distributed}} \\
 \cmidrule(lr){3-4} \cmidrule(lr){6-7} \cmidrule(lr){8-9} \cmidrule(lr){10-11}
  & & $NLPD_{\text{test}}\downarrow$ & $NRMSE_{\text{test}}\downarrow$ &  & $NLPD_{\text{test}}\downarrow$ & $NRMSE_{\text{test}}\downarrow$ & $NLPD_{\text{test}}\downarrow$ & $NRMSE_{\text{test}}\downarrow$ & $NLPD_{\text{test}}\downarrow$ & $NRMSE_{\text{test}}\downarrow$ \\
\midrule
 \multirow{12}{2em}{S\\R\\T\\M} & \multirow{3}{*}{N17E073} & \multirow{3}{*}{$-0.96 \pm 0.080$} & \multirow{3}{*}{$0.10 \pm 0.011$} & {4} & $\boldsymbol{-0.30 \pm 0.320}$ & $\boldsymbol{0.17 \pm 0.021}$ & $-0.02 \pm 0.194$ & $0.18 \pm 0.012$ & $-0.04 \pm 0.187$ & $0.18 \pm 0.012$ \\
 & & & & {8} & $\boldsymbol{-0.40 \pm 0.203}$ & $\boldsymbol{0.17 \pm 0.026}$ & $0.33 \pm 0.243$ & $0.21 \pm 0.012$ & $0.37 \pm 0.260$ & $0.21 \pm 0.012$ \\
 & & & & {27} & $\boldsymbol{-0.09 \pm 0.136}$ & $0.25 \pm 0.032$ & $0.49 \pm 0.283$ & $\boldsymbol{0.22 \pm 0.013}$ & $0.54 \pm 0.272$ & $\boldsymbol{0.22 \pm 0.013}$ \\
 \cmidrule{2-11}
 & \multirow{3}{*}{N43W080} & \multirow{3}{*}{$-1.46 \pm 0.122$} & \multirow{3}{*}{$0.06 \pm 0.016$} & {4} & $-0.46 \pm 0.684$ & $\boldsymbol{0.13 \pm 0.020}$ & $\boldsymbol{-0.49 \pm 0.205}$ & $0.16 \pm 0.020$ & $-0.46 \pm 0.202$ & $0.16 \pm 0.018$ \\
 & & & & {8} & $\boldsymbol{-0.57 \pm 0.421}$ & $\boldsymbol{0.13 \pm 0.016}$ & $-0.37 \pm 0.224$ & $0.17 \pm 0.023$ & $-0.30 \pm 0.218$ & $0.17 \pm 0.021$ \\
 & & & & {27} & $\boldsymbol{-0.28 \pm 0.240}$ & $0.21 \pm 0.035$ & $-0.17 \pm 0.244$ & $\boldsymbol{0.18 \pm 0.024}$ & $-0.15 \pm 0.249$ & $\boldsymbol{0.18 \pm 0.023}$ \\
 \cmidrule{2-11}
 & \multirow{3}{*}{N45W123} & \multirow{3}{*}{$-1.00 \pm 0.245$} & \multirow{3}{*}{$0.10 \pm 0.020$} & {4} & $\boldsymbol{-0.41 \pm 0.399}$ & $\boldsymbol{0.15 \pm 0.020}$ & $0.05 \pm 0.430$ & $0.19 \pm 0.022$ & $0.01 \pm 0.437$ & $0.18 \pm 0.022$ \\
 & & & & {8} & $\boldsymbol{-0.45 \pm 0.445}$ & $\boldsymbol{0.15 \pm 0.023}$ & $0.33 \pm 0.468$ & $0.20 \pm 0.021$ & $0.37 \pm 0.482$ & $0.21 \pm 0.022$ \\
 & & & & {27} & $\boldsymbol{-0.24 \pm 0.171}$ & $0.22 \pm 0.030$ & $0.44 \pm 0.487$ & $\boldsymbol{0.21 \pm 0.020}$ & $0.52 \pm 0.500$ & $0.22 \pm 0.020$ \\
 \cmidrule{2-11}
 & \multirow{3}{*}{N47W124} & \multirow{3}{*}{$-0.68 \pm 0.133$} & \multirow{3}{*}{$0.13 \pm 0.018$} & {4} & $\boldsymbol{-0.55 \pm 0.188}$ & $\boldsymbol{0.15 \pm 0.019}$ & $1.14 \pm 0.609$ & $0.25 \pm 0.022$ & $1.43 \pm 0.595$ & $0.26 \pm 0.021$ \\
 & & & & {8} & $\boldsymbol{-0.57 \pm 0.173}$ & $\boldsymbol{0.16 \pm 0.023}$ & $1.54 \pm 0.648$ & $0.27 \pm 0.021$ & $1.64 \pm 0.630$ & $0.27 \pm 0.020$ \\
 & & & & {27} & $\boldsymbol{-0.04 \pm 0.177}$ & $\boldsymbol{0.26 \pm 0.039}$ & $1.79 \pm 0.672$ & $0.28 \pm 0.020$ & $1.83 \pm 0.675$ & $0.28 \pm 0.020$ \\
\midrule 
\multirow{6}{2em}{2D QGP prior} & \multirow{3}{*}{Set 1} & \multirow{3}{*}{$-0.35 \pm 0.200$} & \multirow{3}{*}{$0.04 \pm 0.006$} & {4} & $\boldsymbol{0.10 \pm 0.155}$ & $\boldsymbol{0.05 \pm 0.010}$ & $2.98 \pm 4.286$ & $0.09 \pm 0.044$ & $3.54 \pm 5.279$ & $0.09 \pm 0.044$ \\
 & & & & {8} & $\boldsymbol{0.24 \pm 0.130}$ & $\boldsymbol{0.07 \pm 0.012}$ & $7.87 \pm 8.547$ & $0.13 \pm 0.057$ & $8.31 \pm 8.816$ & $0.13 \pm 0.057$ \\
 & & & & {27} & $\boldsymbol{0.75 \pm 0.196}$ & $\boldsymbol{0.13 \pm 0.026}$ & $15.61 \pm 10.412$ & $0.18 \pm 0.051$ & $15.78 \pm 10.690$ & $0.18 \pm 0.050$ \\
 \cmidrule{2-11}
 & \multirow{3}{*}{Set 2} & \multirow{3}{*}{$-0.37 \pm 0.209$} & \multirow{3}{*}{$0.04 \pm 0.009$} & {4} & $\boldsymbol{0.06 \pm 0.147}$ & $\boldsymbol{0.06 \pm 0.011}$ & $1.37 \pm 2.871$ & $0.08 \pm 0.044$ & $2.18 \pm 5.522$ & $0.08 \pm 0.047$ \\
 & & & & {8} & $\boldsymbol{0.27 \pm 0.163}$ & $\boldsymbol{0.08 \pm 0.013}$ & $21.08 \pm 47.463$ & $0.15 \pm 0.108$ & $20.26 \pm 47.008$ & $0.14 \pm 0.106$ \\
 & & & & {27} & $\boldsymbol{0.64 \pm 0.159}$ & $\boldsymbol{0.13 \pm 0.015}$ & $26.80 \pm 48.198$ & $0.19 \pm 0.099$ & $26.15 \pm 46.964$ & $0.19 \pm 0.098$ \\
\bottomrule
\end{tabular}
}
\label{tab:results_500}
\end{table*}

\subsubsection{Quantum Kernels}
A standard choice for a quantum kernel based on the quantum fidelity measure $\mathcal{F}$~\eqref{eqn:fidelityker} %that is widely used in quantum computing, can be expressed as,
is, $\kappa_{\mathcal{F}}(\mathbi{x}, \mathbi{x}'| \boldsymbol{\theta}) = \left| \langle \Phi(\mathbi{x}, \boldsymbol{\theta})|\Phi(\mathbi{x}',\boldsymbol{\theta}) \rangle \right|^2 = \left|\langle 0^{\otimes q}| U^\dagger(\mathbi{x},\boldsymbol{\theta}) U(\mathbi{x}', \boldsymbol{\theta}) |0^{\otimes q}\rangle\right|^2,$
% \begin{align*}%\label{eq:fidelity_kernel}
%     \kappa_{\mathcal{F}}(\mathbi{x}, \mathbi{x}'| \boldsymbol{\theta}) = \left| \langle \Phi(\mathbi{x}, \boldsymbol{\theta})|\Phi(\mathbi{x}',\boldsymbol{\theta}) \rangle \right|^2 \nonumber =\left|\langle 0^{\otimes q}| U^\dagger(\mathbi{x},\boldsymbol{\theta}) U(\mathbi{x}', \boldsymbol{\theta}) |0^{\otimes q}\rangle\right|^2,
% \end{align*}
where $U$ denotes the quantum encoding circuit with $q$ qubits and $\boldsymbol{\theta}$ %as its 
the hyperparameters. %However, because 
Since $\kappa_\mathcal{F}$ %~\eqref{eq:fidelity_kernel} 
lacks %due to the absence of 
observable-dependent operations, %$\kappa_\mathcal{F}$~\eqref{eq:fidelity_kernel} has limited 
its expressivity and modularity are limited. To manage this limitation, we %mainly utilize 
employ the \textit{Projected Quantum Kernel} (PQK) \cite{gil2024expressivity}. %Through 
By incorporating measurements \cite{heinosaari2008notes} based on observables, PQK maps the quantum states into a classical feature vector space and then applies the classical outer kernel on that projected space, % to compute the correlations,
% $\kappa_{\textrm{PQK}}(\mathbi{x}, \mathbi{x}'| \boldsymbol{\theta}) = \kappa_{\text{outer}}(\langle O \rangle_{\psi(\mathbi{x})}, \langle O \rangle_{\psi(\mathbi{x}')}),$
\begin{equation}\label{eqn:pqk}
\kappa_{PQK}(\mathbi{x}, \mathbi{x}'| \boldsymbol{\theta}) = \kappa_{\text{outer}}(\langle O \rangle_{\psi(\mathbi{x})}, \langle O \rangle_{\psi(\mathbi{x}')})
\end{equation}
where $\langle O \rangle_{\psi(\mathbi{x})} = \langle 0^{\otimes q} | U^\dagger(\mathbi{x}, \boldsymbol{\theta}) O U(\mathbi{x},\boldsymbol{\theta}) | 0^{\otimes q} \rangle$.
% \begin{equation*}
% \langle O \rangle_{\psi(\mathbi{x})} = \langle 0^{\otimes q} | U^\dagger(\mathbi{x}, \boldsymbol{\theta}) O U(\mathbi{x},\boldsymbol{\theta}) | 0^{\otimes q} \rangle.
% \end{equation*}
The most common choices for the observable operator $O$ include: (i) Pauli-Z measurement $O_Z = \bigotimes_{i=1}^q (\boldsymbol{\sigma}_{\mathbf{Z}})_i $, which captures global correlation among $q$ qubits; (ii) Local Pauli measurement $O_{\text{local}} = \{(\boldsymbol{\sigma}_{\mathbf{Z}})_1, %(\boldsymbol{\sigma}_{\mathbf{Z}})_2, 
\ldots, (\boldsymbol{\sigma}_{\mathbf{Z}})_{q}\}$, which measure individual qubits; and (iii) Mixed Pauli measurement $O_{\text{mixed}} = \{(\boldsymbol{\sigma}_{\mathbf{Z}})_1 (\boldsymbol{\sigma}_{\mathbf{Z}})_2, (\boldsymbol{\sigma}_{\mathbf{X}})_1 (\boldsymbol{\sigma}_{\mathbf{Y}})_2, \ldots\}$, which measure pairwise qubit correlations. %For enhanced expressivity, the \textit{multi-observable} PQK~\cite{huang2021power} can be employed where several observables are measured simultaneously. 
% This form of PQK with $o$ observables yields,
% \begin{align}\label{eqn:pqk}
% \kappa_{\textrm{mPQK}}(\mathbi{x}, \mathbi{x}'| \boldsymbol{\theta}) = \kappa_{\text{outer}}(\langle \mathbi{O} \rangle_{\psi(\mathbi{x})}, \langle \mathbi{O} \rangle_{\psi(\mathbi{x}')}), 
% \end{align}
%where $\langle \mathbi{O} \rangle_{\psi(\mathbi{x})} = \left[\langle O_1 \rangle_{\psi(\mathbi{x})}, \langle O_2 \rangle_{\psi(\mathbi{x})}, \ldots, \langle O_o \rangle_{\psi(\mathbi{x})}\right]^\intercal$. 
Any classical kernel can be used for the outer kernel $\kappa_{\text{outer}}$~\cite{manzhos2024degeneration}. %Common options~\cite{manzhos2024degeneration} include the Gaussian ($\gamma$-parameter) kernel %,
% \begin{equation}\label{eqn:gaussian}
% %\text{Gaussian} &: 
% \kappa^{G}_{\text{outer}}(\langle \mathbi{O} \rangle_{\psi(\mathbi{x})}, \langle \mathbi{O} \rangle_{\psi(\mathbi{x}')})=   \exp\left(-\gamma |\langle \mathbi{O} \rangle_{\psi(\mathbi{x})} - \langle \mathbi{O} \rangle_{\psi(\mathbi{x}')}|^2\right),% \, \text{with parameter $\gamma$}
% \end{equation}
%and the Mat\'ern kernel. %,
% \begin{align}\label{eqn:matern}
% %\text{Matérn \cite{manzhos2024degeneration}} &: 
% \kappa^{M}_{\text{outer}}(\langle \mathbi{O} \rangle_{\psi(\mathbi{x})}, \langle \mathbi{O} \rangle_{\psi(\mathbi{x}')}) \nonumber &= \frac{2^{1-\nu}}{\Gamma(\nu)} \left(\sqrt{2\nu} \frac{|\langle \mathbi{O} \rangle_{\psi(\mathbi{x})} - \langle \mathbi{O} \rangle_{\psi(\mathbi{x}')}|}{\ell}\right)^\nu \nonumber \\ \times~ K_\nu&\left(\sqrt{2\nu} \frac{|\langle \mathbi{O} \rangle_{\psi(\mathbi{x})} - \langle \mathbi{O} \rangle_{\psi(\mathbi{x}')}|}{\ell}\right),
% \end{align}
%where $\Gamma$ is the gamma function, $K_{\nu}$ the $\nu$-order modified Bessel function of second kind, $\nu > 0$ the smoothness, and $l > 0$ the lengthscale hyperparameter. 
With PQK, the computational cost is significantly reduced as it only requires %we only need to evaluate 
evaluating $\mathcal{O}(\text{card}(O))$ expectations, %whereas 
while the fidelity kernel involves computation of $\mathcal{O}(2^q)$ state overlap. In addition, the observable operator measurements provide a physical interpretability with respect to the feature vector space and a way to implement them on quantum hardware. The implementation details of the proposed \textit{Distributed Quantum Gaussian Process} are presented in Algorithm~\ref{alg:distributed-qgp}.

\begin{table*}[!t]
\caption{%Comparing the uncertainty estimation 
NLPD$_{\mathrm{test}}$ and %predictive accuracy 
NRMSE$_{\mathrm{test}}$ %of our approach with other methods using the same dataset configurations and 
for $N=5,000$. %for the 2D Quantum Gaussian process prior, hyperparameter 
Set 1 $= \{0.58, 2.45, 1.88, 1.40, 0.31, 1.44\}$ and %hyperparameter 
 Set 2 $=\{1.18, 2.99, 2.30, 1.88, 0.49, 0.49\}$.}
\centering
\small{\begin{tabular}{c c c c c c c c c c c}
\toprule
  & & \multicolumn{2}{c}{Single Agent Method} & & \multicolumn{6}{c}{Distributed Methods} \\
  \cmidrule(lr){3-4} \cmidrule(lr){6-11}
 {Data} &  Subset & \multicolumn{2}{c}{Full-GP \cite{williams2006gaussian}} & $M$ & \multicolumn{2}{c}{DQGP-DR-ADMM} & \multicolumn{2}{c}{FACT-GP \cite{deisenroth2015distributed}} & \multicolumn{2}{c}{apxGP-ADMM \cite{xie2019distributed}} \\
 \cmidrule(lr){3-4} \cmidrule(lr){6-7} \cmidrule(lr){8-9} \cmidrule(lr){10-11}
  & & $NLPD_{\text{test}}\downarrow$ & $NRMSE_{\text{test}}\downarrow$ &  & $NLPD_{\text{test}}\downarrow$ & $NRMSE_{\text{test}}\downarrow$ & $NLPD_{\text{test}}\downarrow$ & $NRMSE_{\text{test}}\downarrow$ & $NLPD_{\text{test}}\downarrow$ & $NRMSE_{\text{test}}\downarrow$ \\
\midrule
 \multirow{12}{2em}{S\\R\\T\\M} & \multirow{3}{*}{N17E073} & \multirow{3}{*}{$-1.27 \pm 0.030$} & \multirow{3}{*}{$0.07 \pm 0.004$} & {4} & $\boldsymbol{-0.96 \pm 0.199}$ & $\boldsymbol{0.10 \pm 0.014}$ & $0.23 \pm 0.117$ & $0.19 \pm 0.006$ & $0.24 \pm 0.116$ & $0.19 \pm 0.006$ \\
 & & & & {8} & $\boldsymbol{-0.94 \pm 0.194}$ & $\boldsymbol{0.10 \pm 0.015}$ & $0.60 \pm 0.164$ & $0.21 \pm 0.008$ & $0.63 \pm 0.141$ & $0.21 \pm 0.007$ \\
 & & & & {27} & $\boldsymbol{-0.85 \pm 0.031}$ & $\boldsymbol{0.11 \pm 0.005}$ & $0.79 \pm 0.162$ & $0.22 \pm 0.008$ & $0.83 \pm 0.170$ & $0.22 \pm 0.008$ \\
 \cmidrule{2-11}
 & \multirow{3}{*}{N43W080} & \multirow{3}{*}{$-1.78 \pm 0.017$} & \multirow{3}{*}{$0.03 \pm 0.001$} & {4} & $\boldsymbol{-1.19 \pm 0.079}$ & $\boldsymbol{0.07 \pm 0.003}$ & $-0.47 \pm 0.083$ & $0.14 \pm 0.006$ & $-0.48 \pm 0.076$ & $0.14 \pm 0.006$ \\
 & & & & {8} & $\boldsymbol{-1.22 \pm 0.110}$ & $\boldsymbol{0.07 \pm 0.006}$ & $-0.30 \pm 0.092$ & $0.15 \pm 0.007$ & $-0.28 \pm 0.091$ & $0.15 \pm 0.007$ \\
 & & & & {27} & $\boldsymbol{-1.00 \pm 0.093}$ & $\boldsymbol{0.09 \pm 0.005}$ & $-0.23 \pm 0.101$ & $0.16 \pm 0.007$ & $-0.15 \pm 0.110$ & $0.16 \pm 0.008$ \\
 \cmidrule{2-11}
 & \multirow{3}{*}{N45W123} & \multirow{3}{*}{$-1.45 \pm 0.031$} & \multirow{3}{*}{$0.06 \pm 0.003$} & {4} & $\boldsymbol{-1.03 \pm 0.109}$ & $\boldsymbol{0.09 \pm 0.006}$ & $-0.13 \pm 0.108$ & $0.17 \pm 0.007$ & $-0.12 \pm 0.108$ & $0.17 \pm 0.007$ \\
 & & & & {8} & $\boldsymbol{-0.99 \pm 0.217}$ & $\boldsymbol{0.09 \pm 0.011}$ & $0.08 \pm 0.113$ & $0.18 \pm 0.008$ & $0.18 \pm 0.124$ & $0.19 \pm 0.008$ \\
 & & & & {27} & $\boldsymbol{-0.93 \pm 0.074}$ & $\boldsymbol{0.10 \pm 0.007}$ & $0.26 \pm 0.127$ & $0.19 \pm 0.008$ & $0.35 \pm 0.133$ & $0.20 \pm 0.008$ \\
 \cmidrule{2-11}
 & \multirow{3}{*}{N47W124} & \multirow{3}{*}{$-1.22 \pm 0.043$} & \multirow{3}{*}{$0.08 \pm 0.006$} & {4} & $\boldsymbol{-0.99 \pm 0.080}$ & $\boldsymbol{0.10 \pm 0.007}$ & $0.84 \pm 0.112$ & $0.23 \pm 0.013$ & $0.84 \pm 0.115$ & $0.23 \pm 0.013$ \\
 & & & & {8} & $\boldsymbol{-0.95 \pm 0.091}$ & $\boldsymbol{0.10 \pm 0.010}$ & $1.05 \pm 0.119$ & $0.24 \pm 0.013$ & $1.05 \pm 0.121$ & $0.24 \pm 0.013$ \\
 & & & & {27} & $\boldsymbol{-0.86 \pm 0.055}$ & $\boldsymbol{0.11 \pm 0.006}$ & $1.18 \pm 0.127$ & $0.25 \pm 0.014$ & $1.17 \pm 0.129$ & $0.25 \pm 0.014$ \\
 \midrule
 \multirow{6}{2em}{2D QGP prior} & \multirow{3}{*}{Set 1} & \multirow{3}{*}{$-0.32 \pm 0.197$} & \multirow{3}{*}{$0.03 \pm 0.007$} & {4} & $\boldsymbol{-0.24 \pm 0.157}$ & $\boldsymbol{0.03 \pm 0.005}$ & $0.35 \pm 1.293$ & $0.05 \pm 0.025$ & $1.05 \pm 3.948$ & $0.05 \pm 0.031$ \\
 & & & & {8} & $\boldsymbol{-0.05 \pm 0.138}$ & $\boldsymbol{0.03 \pm 0.003}$ & $8.70 \pm 22.303$ & $0.10 \pm 0.060$ & $9.71 \pm 27.306$ & $0.10 \pm 0.068$ \\
 & & & & {27} & $\boldsymbol{0.14 \pm 0.161}$ & $\boldsymbol{0.05 \pm 0.004}$ & $14.76 \pm 25.471$ & $0.13 \pm 0.058$ & $14.83 \pm 27.800$ & $0.13 \pm 0.064$ \\
 \cmidrule{2-11}
 & \multirow{3}{*}{Set 2} & \multirow{3}{*}{$-0.38 \pm 0.272$} & \multirow{3}{*}{$0.03 \pm 0.010$} & {4} & $\boldsymbol{-0.22 \pm 0.198}$ & $\boldsymbol{0.03 \pm 0.005}$ & $1.35 \pm 3.251$ & $0.06 \pm 0.032$ & $1.57 \pm 3.723$ & $0.06 \pm 0.032$ \\
 & & & & {8} & $\boldsymbol{-0.14 \pm 0.151}$ & $\boldsymbol{0.03 \pm 0.004}$ & $6.82 \pm 12.556$ & $0.09 \pm 0.050$ & $5.58 \pm 8.786$ & $0.09 \pm 0.043$ \\
 & & & & {27} & $\boldsymbol{0.11 \pm 0.125}$ & $\boldsymbol{0.05 \pm 0.007}$ & $11.17 \pm 17.335$ & $0.11 \pm 0.057$ & $11.55 \pm 15.325$ & $0.12 \pm 0.053$ \\
\bottomrule
\end{tabular}
}
\label{tab:results_5000}
\end{table*}

In Fig.~\ref{fig:application}, we depict the conceptual architecture of DQGP, where each agent is a computational node assigned to a local subset of the spatial dataset and trains a local QGP model. After each local step, agents push a learned~$\boldsymbol{\tilde{\theta}}_m$ to a central server, which aggregates them into a consensus model $\tilde{\mathbi{z}}$. Next, each agent pulls $\tilde{\boldsymbol{z}}$ from the central server to form a consensus on their local models with the globally aggregated model. This push-pull mechanism produces model-level consensus. % and collaboration.}
In Algorithm~\ref{alg:distributed-qgp}, we use the negative log predictive density (NLPD) for validation, 
\begin{align}\label{eqn:nlpdtest}
\text{NLPD}_{\text{test}} = & \frac{1}{\text{card}(\mathcal{D}_\text{test})} \sum_{j \in \mathcal{D}_\text{test}} \left[\frac{1}{2}\log(2\pi\sigma^2_*(j)) + \frac{(y_j - \mu_*(j))^2}{2\sigma^2_*(j)}\right].
\end{align}
To evaluate the global parameter $\tilde{\mathbi{z}}$ across training iterations, we perform \texttt{F-fold\_Cross-Validation} using combined datasets $\bigcup_m \mathcal{D}_m$. For each fold $f = 1,2,\ldots,F$, the combined dataset is randomly shuffled and partitioned into training $\mathcal{D}^{f}_{\text{train}}$ and validation $\mathcal{D}^{f}_{\text{val}}$ sets to compute $\text{NLPD}^{f}_{\text{CV}}$. The mean of all these folds, $\text{NLPD}_{\text{CV}}$, is monitored over iterations to obtain the optimal global $\tilde{\mathbi{z}}^{*}$. For DQGP prediction at an unknown input $\boldsymbol{x}_*$, each agent $m$ computes its predictive mean $\mu_m(\boldsymbol{x}_*)$ and variance $\sigma_m^2(\boldsymbol{x}_*)$ using local dataset $\mathcal{D}_m$ and optimized consensus parameter $\tilde{\boldsymbol{z}}^*=[(\mathbi{z}_{\boldsymbol{\theta}}^{*})^\intercal, z_{\sigma_\epsilon^2}^{*}]^\intercal$ by,
\begin{align*}
    \mu_m(\mathbi{x}_*) &= [\boldsymbol{\kappa}_{Q}]_m(\mathbi{x}_*, \mathbi{X}_m|\mathbi{z}_{\boldsymbol{\theta}}^{*})
    \left([\boldsymbol{\kappa}_{Q}]_m(\mathbi{X}_m, \mathbi{X}_m|\mathbi{z}_{\boldsymbol{\theta}}^{*}) + z_{\sigma_\epsilon^2}^{*}\, \mathbi{I}_{N_m}\right)^{-1} \mathbi{y}_m,\\
    \sigma_m^2(\mathbi{x}_*) &= \kappa_{Q,m}(\mathbi{x}_*, \mathbi{x}_*|\mathbi{z}_{\boldsymbol{\theta}}^{*}) - [\boldsymbol{\kappa}_{Q}]_m(\mathbi{x}_*, \mathbi{X}_m|\mathbi{z}_{\boldsymbol{\theta}}^{*})\\
    &\times\left([\boldsymbol{\kappa}_{Q}]_m(\mathbi{X}_m, \mathbi{X}_m|\mathbi{z}_{\boldsymbol{\theta}}^{*}) + z_{\sigma_\epsilon^2}^{*}\, \mathbi{I}_{N_m}\right)^{-1}[\boldsymbol{\kappa}_{Q}]_m(\mathbi{X}_m, \mathbi{x}_*|\mathbi{z}_{\boldsymbol{\theta}}^{*}).
\end{align*}
The global aggregated prediction $\mu_*(\boldsymbol{x}_*)$ and variance $\sigma_*(\boldsymbol{x}_*)$ are then calculated using the Generalized Product of Experts (gPoE)~\cite{deisenroth2015distributed}. % with uniform weights $\beta_m = 1/M$.} 
Finally, we employ  the normalized root mean squared error (NRMSE) to assess the prediction, %performance,
\begin{eqnarray}
\text{NRMSE}_{\text{test}} = \frac{\sqrt{\frac{1}{\text{card}(\mathcal{D}_\text{test})} \sum_{j \in \mathcal{D}_\text{test}} (y_j - \mu_*(j))^2}}{|y_{max} - y_{min}|}. \label{eqn:nrmsetest}
\end{eqnarray}

\section{Numerical Experiments \& Results}\label{sec4}

We evaluate the predictive capabilities of our approach %by conducting 
through experiments on both real-world and synthetic datasets. %It is imperative to test on real-world datasets because of their non-stationary features, i.e., different regions have different degrees of variability. It determines whether the learned model can fully adapt and learn these local features, which makes it suitable for solving complex problems. 
For the real-world datasets, we incorporate four tiles, N17E073, N43W080, N45W123, and N47W124 from NASA's Shuttle Radar Topography Mission (SRTM)~\cite{farr2007shuttle}. % as our real-world dataset (from \href{https://dwtkns.com/srtm30m/}{https://dwtkns.com/srtm30m/}). 
The %real-world 
datasets have two-dimensional inputs representing %featuring %dataset that maps features: 
latitude and longitude, with elevation as the output. 
%From a vast amount of elevation maps that encompass the world, we selected four environments: N17E073, N43W080, N45W123, and N47W124, shown in Figure \ref{fig:srtmViz}. N17E073 is characterized by the Western Ghats mountain ranges near the western coast of India. N45W123 represents the Lake Ontario region near Toronto with relatively flat plains. N43W123 points to the dramatic landscape changes in the Pacific Northwest region of the USA, as it encompasses the western coast as well as high volcanic peaks. N47W124 contains the Olympic Peninsula in Washington state. 
As reported in~%Previous work 
\cite{chen2024adaptive}, these datasets exhibit %were assessed %the presence of 
non-stationarity, %features, 
i.e., different regions have varying degrees of variability. %, in these regions. 
This is important for assessing %to determine 
whether the learned model can fully adapt to the local features, thus demonstrating its suitability %making it suitable 
for solving complex problems. For the synthetic dataset, we generate a quantum Gaussian process (QGP) prior using quantum kernels, sample points from it to act as a training dataset, and learn the original QGP. %A two-dimensional quantum Gaussian process, along with local data partitioning for four agents, is depicted in Figure \ref{fig:2DQGPViz}. 
%Finally, we use the three-dimensional Hartmann function, one of the benchmark functions for examining the optimization techniques, as a training dataset generator function. Here, once again, our approach's goal is to learn the function.

% \begin{figure*}[!t]
% 	\includegraphics[width=\textwidth]{figures/srtm_visualization.png}
% 	\centering
% 	\caption{NASA Shuttle Radar Topography Mission's elevation plots for four environments possessing non-stationary features.
%     %used as a real-world dataset for our numerical experiments. All these regions possess non-stationary features, making them viable regions for our model evaluation.
%     }
% 	\label{fig:srtmViz}
% \end{figure*}

% \begin{figure}[!t]
% 	\includegraphics[width=\columnwidth]{figures/2D-QGP.png}
% 	\centering
% 	\caption{(left) 2D Quantum Gaussian Process prior used as a synthetic dataset for our numerical experiments. (right) Partitioning the sampled data points to four mutually exhaustive regions for four agents.}
% 	\label{fig:2DQGPViz}
% \end{figure}

We use two metrics, the NLPD~\eqref{eqn:nlpdtest} and NRMSE~\eqref{eqn:nrmsetest}, which test %different yet 
complementary aspects of %our approach's 
performance. Lower values for both metrics indicate better model performance. Since GPs are %inherently 
probabilistic, NLPD evaluates the quality of the predictive distribution, % by quantifying uncertainty estimates, 
while NRMSE measures the accuracy of the mean predictions. %These metrics are complementary: a learned model with a lower $NLPD$ but higher $NRMSE$ signifies that it is either over-predicting or under-predicting; alternatively, a model with lower $NRMSE$ but higher $NLPD$ indicates that it is incorrectly estimating the uncertainties. Hence, 
%Collectively\
%Together, they provide a comprehensive %picture 
%assessment of the model's capabilities. 
All numerical experiments are %performed 
conducted using classical quantum state vector simulators in Qiskit and PennyLane, employing %. We adopt 
quantum encoding circuits and kernels in %provided by 
\textsc{sQUlearn}~\cite{kreplin2025squlearn}. %Note that this work does not involve a computational complexity analysis comparison because of limited access to real quantum computing hardware. 
%It is to be noted that 
The computational complexity analysis on a quantum simulator run on classical hardware %will 
does not represent %of our methodology's 
the true complexity on current NISQ quantum hardware, hence this work does not involve complexity analysis.

\begin{figure*}[!t]
	\includegraphics[width=\textwidth]{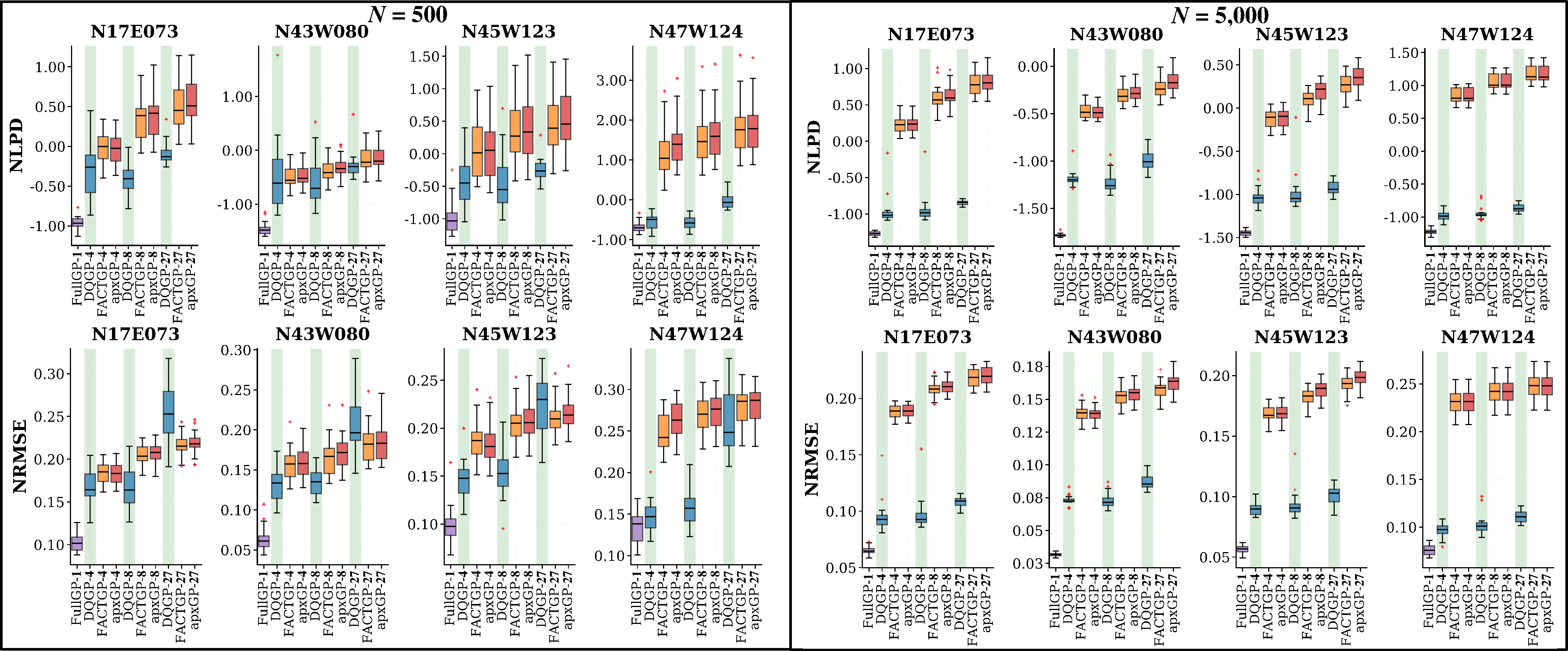}
	\centering
	\caption{Performance of DQGP (green) with the SRTM dataset, compared to Full-GP \cite{williams2006gaussian}, FACT-GP \cite{deisenroth2015distributed}, and apx-GP \cite{xie2019distributed}.}%For N43W080 and N47W124, we have excluded visualizing the apxGP results (worst performance) in NLPD for better readability.}
	\label{fig:srtm}
\end{figure*}

\begin{figure*}[!t]
	\includegraphics[width=\textwidth]{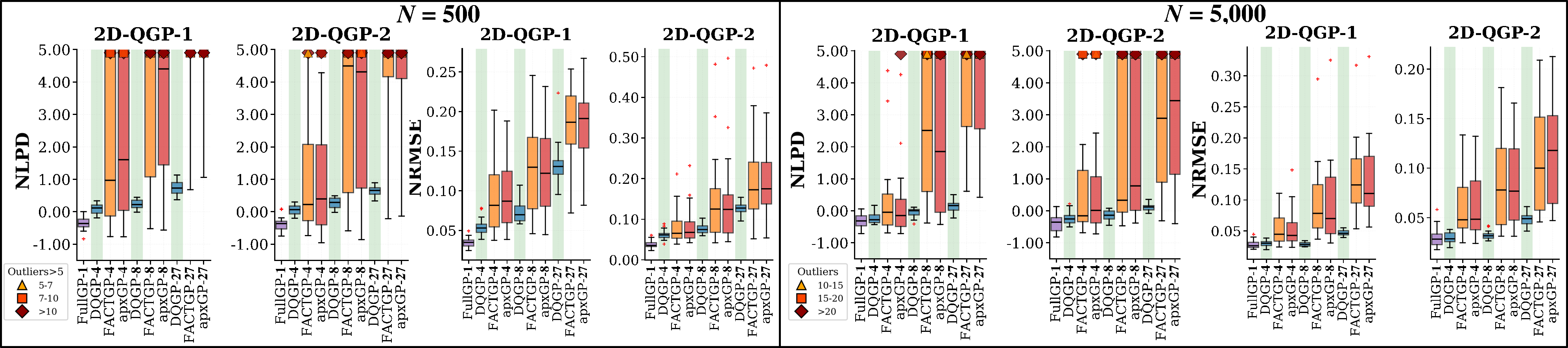}
	\centering
	\caption{Performance of DQGP (green) on 2D QGP prior dataset, compared to Full-GP \cite{williams2006gaussian}, FACT-GP \cite{deisenroth2015distributed}, and apx-GP \cite{xie2019distributed}.}
	\label{fig:2dqgp}
\end{figure*}

% For our experiments, the choice of the number of agents $M = \{1, 4, 8, 27\}$ is based on our regional data partitioning algorithm, as these $M$ allow data to be split into regular grids. For regular grids, the number of fragments in the $D$-dimensional space should be the perfect $D^{\text{th}}$ power, that is, $2^{(D=2)} = 4,\, 2^{(D=3)} = 8, \, 3^{(D=3)} =27 $. As a result of these selections, we generate a well-balanced and spatially coherent agent regions needed for quantum GP's underlying spatial correlation structure.

Across the four environments from the SRTM real-world dataset, we use %work with 
$N=\{500, 5,000\}$ samples, of which $10\%$ is %left 
reserved for testing and the rest $90\%$ is distributed among the agents %to be used 
as their respective training datasets. %Note that here t
The features and target are z-score normalized to % from 
$[-3,3]$ as the quantum parameters---mostly rotational---are naturally bounded. The noise hyperparameter $\sigma^2_\epsilon$ is bounded from $0.1$ to $1$ for all datasets. For quantum encoding, we employ a Hubregtsen Encoding Circuit~\cite{hubregtsen2022training} (Fig.~\ref{fig:approach}) with $q=3$ qubits and $\iota=1$ variational layers, resulting in $P=6$ quantum hyperparameters. We utilize the projected quantum kernel $\kappa_{\textrm{PQK}}$~\eqref{eqn:pqk} with the Pauli $\boldsymbol{\sigma}_{\mathbf{X}}\boldsymbol{\sigma}_{\mathbf{Y}}\boldsymbol{\sigma}_{\mathbf{Z}}$ measurement operator and the Mat\'ern outer kernel%~\eqref{eqn:matern}
, whose parameters are %. Mat\'rn outer kernel's parameter values are 
set to $\ell =1.0$, $\nu = 1.5 $. Quantum hyperparameters are optimized using DR-ADMM (Algorithm~\ref{alg:DR-ADMM}) with shift value %set to 
$\delta =(\pi/8)$ and ADMM parameters: penalty $\rho = 100$ and Lipschitz constant $\mathbi{L} = \{100\}^{M}_{m=1}$. For %experiments with 
the synthetic 2D QGP prior dataset, 
% we use %the number of samples is 
% $N=\{500, 5,000\}$ samples. %Normalization is not required since the data is %Moreover, it does not require normalization, as it is 
% %inherently generated from a QGP. 
% The quantum encoding circuit %we incorporate here 
% is the Hubregtsen Encoding Circuit %depicted in 
% (Fig.~\ref{fig:approach}), with $q=3$ qubits and $\iota=1$ layer resulting in $P=6$ quantum hyperparameters. The quantum kernel %here 
% is %also 
% the projected $\kappa_{\textrm{PQK}}$~\eqref{eqn:pqk} with the Pauli $\boldsymbol{\sigma}_{\mathbf{X}}\boldsymbol{\sigma}_{\mathbf{Y}}\boldsymbol{\sigma}_{\mathbf{Z}}$ measurement operator. %However, t
% The outer kernel is Gaussian %~\eqref{eqn:gaussian}
% with parameter $\gamma = 1.0$. ADMM parameters, 
the setup remains identical to that of the SRTM dataset. 

%\section{Results}

In Table \ref{tab:results_500}, \ref{tab:results_5000}, we present the performance of our method across %various 
datasets of size %and experimental configurations for 
$N=500$ and $N=5,000$, respectively. We report the mean and standard deviation of NLPD$_{\text{test}}$ and NRMSE$_{\text{test}}$ evaluated over 20 %independent %experimental runs 
replications. % to reduce the effect of randomly assigned data. 
% To compare %the performance of 
% our approach with other single- and multi-agent GP methods, we include %provide 
% the evaluation metrics %with respect to 
% for the following methodologies when trained on the same dataset configuration: 
We compare our method with other single- and multi-agent GP methods: % including: 
Full-GP \cite{williams2006gaussian}, FACT-GP \cite{deisenroth2015distributed}, and apxGP~% using the pxADMM consensus algorithm 
\cite{xie2019distributed}. 
% The comparison highlights %assists in determining 
% the predictive performance and uncertainty quantification of our approach relative to existing methods. %On the other hand, the 
% Additionally, comparisons %of values 
% across Tables~\ref{tab:results_500},~\ref{tab:results_5000} allow scalability assessment of network sizes. % can be used to evaluate how our approach and other approaches scale with the size of the network. 
In addition to NLPD and NRMSE, the comparison allows for network scalability assessment. % of network sizes in Table~\ref{tab:results_500},~\ref{tab:results_5000}.
In Fig.~\ref{fig:srtm},~\ref{fig:2dqgp}, we present the performance in the SRTM and 2D QGP datasets, respectively.
% We also present boxplots to illustrate %for 
% the performance %of all approaches with 
% in the SRTM and 2D QGP datasets in Fig.~\ref{fig:srtm},~\ref{fig:2dqgp} respectively.

Aggregating the improvements across the four SRTM environments for $N=500$ and $N=5,000$ datasets, DQGP achieves $32.4\% \pm 22.7\%$ lower $\text{NRMSE}_{\text{test}}$ than FACT-GP, and $32.9\% \pm 22.7\%$ lower $\text{NRMSE}_{\text{test}}$ than apxGP. %Although DQGP follows the standard distributed computing trade-off---exhibiting an average $\text{NRMSE}_{\text{test}}$ that is $83.1\%$ higher than the centralized Full-GP---it significantly outperforms other distributed baselines. 
Moreover, the results suggest that DQGP scales effectively with $M$, i.e., while all distributed methods see a rise in error as $M$ increases from 4 to 27, DQGP retains its predictive edge more robustly. For all approaches, a trade-off is observed in $\text{NLPD}_{\text{test}}$ as the number of agents increases. Specifically, DQGP achieves a $349.8\% \pm 369.0\%$ improvement in $\text{NLPD}_{\text{test}}$ relative to FACT-GP and a $435.5\% \pm 810.2\%$ improvement relative to apxGP, with the highest gains appearing in $M=27$ case.

For the synthetic dataset, DQGP achieves on average $47.0\% \pm 14.0\%$ lower $\text{NRMSE}_{\text{test}}$ than FACT-GP and $47.0\% \pm 14.2\%$ lower $\text{NRMSE}_{\text{test}}$ than apxGP. %Compared to the single-agent Full-GP, DQGP shows a performance gap of $69.4\%$ in aggregated $\text{NRMSE}_{\text{test}}$, which is a narrower margin than that observed in the SRTM datasets. 
With regards to uncertainty quantification, DQGP achieves a $105.5\% \pm 19.7\%$ improvement in $\text{NLPD}_{\text{test}}$ relative to FACT-GP and a $101.7\% \pm 7.9\%$ improvement over apxGP. The enhanced DQGP performance relative to other distributed approaches is even more pronounced in 2D Quantum GP prior dataset. %For both real and synthetic datasets, the single agent Full-GP method acts as the performance benchmark for distributed approaches.

Overall, the proposed DQGP %\textit{Distributed Quantum Gaussian Process} with \textit{Distributed consensus Riemannian ADMM}
demonstrates a substantial advantage in prediction %accuracy 
over classical distributed counterparts, as is evident from the lower NRMSE$_{\text{test}}$ values for both $N=\{500, 5,000\}$. %Upon comparing $NLPD_{\text{test}}$ between $N=500$ and $N=5000$, we can conclude that as the network size increases, our method becomes overconfident; however, it excels in data-scarce regimes.
It %succeeds 
also successfully quantifies %in quantifying 
the uncertainty for both $N=\{500, 5,000\}$, as reflected by lower NLPD$_\text{test}$ values. For some datasets with $N=500$ and $M=27$, FACT-GP yields lower NRMSE$_\text{test}$ than DQGP. The rare performance gap stems from the high-dimensional mapping of sparse data using expressive quantum kernels. When a small amount of classical data is embedded into the quantum Hilbert space, the corresponding quantum states become nearly orthogonal. This eliminates correlations and triggers barren plateau~\cite{larocca2025barren, mcclean2018barren}.
% due to its block-diagonal posterior covariance approximation, which %leads to a 
% produces less conservative and more stable uncertainty estimates. In contrast, DQGP with DR-ADMM %focuses on 
% prioritizes on aligning the global objective across the agents, resulting in more accurate mean predictions and only occasionally resulting in %and overly 
% conservative uncertainty estimates.

%%%%%%%%%%%%%%%%%%%%%%%%%%%%%%%%%%%%%%%%%%%%%%%%%%%%%%%%%%%%%%%%%%%%%%%%

\section{Conclusion}
This paper %proposes 
introduces a distributed quantum gaussian process (DQGP) method that scales the expressive power  %allowing the enhanced expressivity 
of quantum kernels to %be applied to expansive 
real-world, non-stationary datasets. % through distributed learning. %datasets by increasing their scalability. 
%Moreover, to deal with 
To address the non-Euclidean quantum hyperparameter optimization, we propose a Distributed consensus Riemannian ADMM (DR-ADMM) approach. % implemented in a distributed setting. 
Experiments on real-world and synthetic datasets demonstrate %To evaluate our proposed approach against competitive single-agent and multi-agent gaussian process regression methods, we conducted experiments with them on real-world and synthetic datasets. Our experiments reveal a 
enhanced %predictive accuracy % performance with a slightly 
%and competitive uncertainty estimation 
performance compared to classical distributed methods. %relative to other methods. 
The results highlight the potential of DQGP for scalable %promising results of our approach highlight a path forward for large-scale, 
probabilistic modeling on hybrid classical-quantum systems. %Ongoing work focuses on active learning tasks, paving the way for quantum-enhanced autonomous systems. %more accurate and expressive distributed QGP models.

%The promising results of our approach have encouraged several future directions. We want to test our approach with higher-dimensional data, as that is where quantum computing shines. We want to explore ways to make our approach decentralized. Lastly, we want to present a computational complexity analysis of our approach by implementing it on real quantum hardware.

%%%%%%%%%%%%%%%%%%%%%%%%%%%%%%%%%%%%%%%%%%%%%%%%%%%%%%%%%%%%%%%%%%%%%%%%

%%% The acknowledgments section is defined using the "acks" environment
%%% (rather than an unnumbered section). The use of this environment 
%%% ensures the proper identification of the section in the article 
%%% metadata as well as the consistent spelling of the heading.

% \begin{acks}
% If you wish to include any acknowledgments in your paper (e.g., to 
% people or funding agencies), please do so using the `\texttt{acks}' 
% environment. Note that the text of your acknowledgments will be omitted
% if you compile your document with the `\texttt{anonymous}' option.
% \end{acks}

%%%%%%%%%%%%%%%%%%%%%%%%%%%%%%%%%%%%%%%%%%%%%%%%%%%%%%%%%%%%%%%%%%%%%%%%

%%% The next two lines define, first, the bibliography style to be 
%%% applied, and, second, the bibliography file to be used.
% \balance 
\bibliographystyle{ACM-Reference-Format} 
\bibliography{sample}

@article{kontoudis2025multi,
  title={Multi-Agent Federated Learning Using Covariance-Based Nearest Neighbor {Gaussian} Processes},
  author={Kontoudis, George P and Stilwell, Daniel J},
  journal={IEEE Transactions on Machine Learning in Communications and Networking},
  volume={4},
  pages={115--138},
  year={2025},
  publisher={IEEE}
}

@inproceedings{kontoudis2021decentralized,
  title={Decentralized nested {G}aussian processes for multi-robot systems},
  booktitle = {IEEE International Conference on Robotics and Automation},
  author={Kontoudis, George P and Stilwell, Daniel J},
  pages = {8881--8887},
  year = {2021}
}

@inproceedings{kontoudis2023decentralized,
  title={Decentralized federated learning using {G}aussian processes},
  author={Kontoudis, George P and Stilwell, Daniel J},
  booktitle={IEEE International Symposium on Multi-Robot and Multi-Agent Systems},
  pages={1--7},
  year={2023}
}

@article{chang2014multi,
  title={Multi-agent distributed optimization via inexact consensus {ADMM}},
  author={Chang, Tsung-Hui and Hong, Mingyi and Wang, Xiangfeng},
  journal={IEEE Transactions on Signal Processing},
  volume={63},
  number={2},
  pages={482--497},
  year={2014},
  publisher={IEEE}
}

@book{boyd2011admm,
year = {2011},
volume = {3},
journal = {Foundations and Trends in Machine Learning},
title = {Distributed Optimization and Statistical Learning via the Alternating Direction Method of Multipliers},
number = {1},
author = {Stephen Boyd and Neal Parikh and Eric Chu and Borja Peleato and Jonathan Eckstein}
}

@article{liu2020gaussian,
  title={When {G}aussian process meets big data: A review of scalable {GP}s},
  author={Liu, Haitao and Ong, Yew-Soon and Shen, Xiaobo and Cai, Jianfei},
  journal={IEEE Transactions on Neural Networks and Learning Systems},
  year={2020},
  volume={31},
  number={11},
  pages={4405--4423},
  publisher={IEEE}
}

@inproceedings{liu2018generalized,
  title={Generalized Robust {B}ayesian Committee Machine for Large-scale {G}aussian Process Regression},
  author={Liu, Haitao and Cai, Jianfei and Wang, Yi and Ong, Yew Soon},
  booktitle={International Conference on Machine Learning},
  pages={3131--3140},
  year={2018}
}

@article{wierichs2022general,
  title={General parameter-shift rules for {Q}uantum gradients},
  author={Wierichs, David and Izaac, Josh and Wang, Cody and Lin, Cedric Yen-Yu},
  journal={Quantum},
  volume={6},
  pages={677},
  year={2022},
  publisher={Verein zur F{\"o}rderung des Open Access Publizierens in den Quantenwissenschaften}
}

@inproceedings{parekh2021quantum,
  title={Quantum algorithms and simulation for parallel and distributed quantum computing},
  author={Parekh, Rhea and Ricciardi, Andrea and Darwish, Ahmed and DiAdamo, Stephen},
  booktitle={IEEE/ACM Intern. Workshop on Quantum Computing Software},
  pages={9--19},
  year={2021}
}

@article{kontoudis2024scalable,
  title={Scalable, federated Gaussian process training for decentralized multi-agent systems},
  author={Kontoudis, George P and Stilwell, Daniel J},
  journal={IEEE Access},
  volume={12},
  pages={77800--77815},
  year={2024},
  publisher={IEEE}
}

@article{chen2022quantum,
  title={Quantum algorithm for Gaussian process regression},
  author={Chen, Meng-Han and Yu, Chao-Hua and Gao, Jian-Liang and Yu, Kai and Lin, Song and Guo, Gong-De and Li, Jing},
  journal={Physical Review A},
  volume={106},
  number={1},
  pages={012406},
  year={2022},
  publisher={APS}
}

@article{zhao2019quantum,
  title={Quantum-assisted Gaussian process regression},
  author={Zhao, Zhikuan and Fitzsimons, Jack K and Fitzsimons, Joseph F},
  journal={Physical Review A},
  volume={99},
  number={5},
  pages={052331},
  year={2019},
  publisher={APS}
}

@article{farooq2024quantum,
  title={Quantum-assisted Hilbert-space Gaussian process regression},
  author={Farooq, Ahmad and Galvis-Florez, Cristian A and S{\"a}rkk{\"a}, Simo},
  journal={Physical Review A},
  volume={109},
  number={5},
  pages={052410},
  year={2024},
  publisher={APS}
}

@article{kus2021sparse,
  title={Sparse quantum Gaussian processes to counter the curse of dimensionality},
  author={Ku{\'s}, Gawe{\l} I and van der Zwaag, Sybrand and Bessa, Miguel A},
  journal={Quantum Machine Intelligence},
  volume={3},
  number={1},
  pages={6},
  year={2021},
  publisher={Springer}
}

@article{hubregtsen2022training,
  title={Training quantum embedding kernels on near-term quantum computers},
  author={Hubregtsen, Thomas and Wierichs, David and Gil-Fuster, Elies and Derks, Peter-Jan HS and Faehrmann, Paul K and Meyer, Johannes Jakob},
  journal={Physical Review A},
  volume={106},
  number={4},
  pages={042431},
  year={2022},
  publisher={APS}
}

@article{huang2021power,
  title={Power of data in quantum machine learning},
  author={Huang, Hsin-Yuan and Broughton, Michael and Mohseni, Masoud and Babbush, Ryan and Boixo, Sergio and Neven, Hartmut and McClean, Jarrod R},
  journal={Nature communications},
  volume={12},
  number={1},
  pages={2631},
  year={2021},
  publisher={Nature Publishing Group UK London}
}

@article{arceci2024gaussian,
  title={Gaussian process model kernels for noisy optimization in variational quantum algorithms},
  author={Arceci, Luca and Kuzmin, Viacheslav and Van Bijnen, Rick},
  journal={arXiv preprint arXiv:2412.13271},
  year={2024}
}

@article{smith2023faster,
  title={Faster variational quantum algorithms with quantum kernel-based surrogate models},
  author={Smith, Alistair WR and Paige, AJ and Kim, MS},
  journal={Quantum Science and Technology},
  volume={8},
  number={4},
  pages={045016},
  year={2023},
  publisher={IOP Publishing}
}

@article{xie2019distributed,
  title={Distributed Gaussian processes hyperparameter optimization for big data using proximal ADMM},
  author={Xie, Ang and Yin, Feng and Xu, Yue and Ai, Bo and Chen, Tianshi and Cui, Shuguang},
  journal={IEEE Signal Processing Letters},
  volume={26},
  number={8},
  pages={1197--1201},
  year={2019},
  publisher={IEEE}
}

@book{williams2006gaussian,
  title={Gaussian processes for machine learning},
  author={Williams, Christopher KI and Rasmussen, Carl Edward},
  volume={2},
  number={3},
  year={2006},
  publisher={MIT press Cambridge, MA}
}

@inproceedings{deisenroth2015distributed,
  title={Distributed gaussian processes},
  author={Deisenroth, Marc and Ng, Jun Wei},
  booktitle={International conference on machine learning},
  pages={1481--1490},
  year={2015},
  organization={PMLR}
}

@article{galvis2025provable,
  title={Provable Quantum Algorithm Advantage for Gaussian Process Quadrature},
  author={Galvis-Florez, Cristian A and Farooq, Ahmad and S{\"a}rkk{\"a}, Simo},
  journal={arXiv preprint arXiv:2502.14467},
  year={2025}
}

@article{larocca2025barren,
  title={Barren plateaus in variational quantum computing},
  author={Larocca, Martin and Thanasilp, Supanut and Wang, Samson and Sharma, Kunal and Biamonte, Jacob and Coles, Patrick J and Cincio, Lukasz and McClean, Jarrod R and Holmes, Zo{\"e} and Cerezo, Marco},
  journal={Nature Reviews Physics},
  pages={1--16},
  year={2025},
  publisher={Nature Publishing Group UK London}
}

@article{cerezo2021variational,
  title={Variational quantum algorithms},
  author={Cerezo, Marco and Arrasmith, Andrew and Babbush, Ryan and Benjamin, Simon C and Endo, Suguru and Fujii, Keisuke and McClean, Jarrod R and Mitarai, Kosuke and Yuan, Xiao and Cincio, Lukasz and others},
  journal={Nature Reviews Physics},
  volume={3},
  number={9},
  pages={625--644},
  year={2021},
  publisher={Nature Publishing Group UK London}
}

@inproceedings{lloyd2010quantum,
  title={Quantum algorithm for solving linear systems of equations},
  author={Lloyd, Seth},
  booktitle={APS March Meeting Abstracts},
  volume={2010},
  pages={D4--002},
  year={2010}
}

@article{guerreschi2019qaoa,
  title={QAOA for Max-Cut requires hundreds of qubits for quantum speed-up},
  author={Guerreschi, Gian Giacomo and Matsuura, Anne Y},
  journal={Scientific reports},
  volume={9},
  number={1},
  pages={6903},
  year={2019},
  publisher={Nature Publishing Group UK London}
}

@article{kreplin2025squlearn,
  title={sQUlearn: a Python library for quantum machine learning},
  author={Kreplin, David A and Willmann, Moritz and Schnabel, Jan and Rapp, Frederic and Hagel{\"u}ken, Manuel and Roth, Marco},
  journal={IEEE Software},
  year={2025},
  publisher={IEEE}
}

@article{rapp2024quantum,
  title={Quantum gaussian process regression for bayesian optimization},
  author={Rapp, Frederic and Roth, Marco},
  journal={Quantum Machine Intelligence},
  volume={6},
  number={1},
  pages={5},
  year={2024},
  publisher={Springer}
}

@article{kanagawa2018gaussian,
  title={Gaussian processes and kernel methods: A review on connections and equivalences},
  author={Kanagawa, Motonobu and Hennig, Philipp and Sejdinovic, Dino and Sriperumbudur, Bharath K},
  journal={arXiv preprint arXiv:1807.02582},
  year={2018}
}

@article{preskill2018quantum,
  title={Quantum computing in the NISQ era and beyond},
  author={Preskill, John},
  journal={Quantum},
  volume={2},
  pages={79},
  year={2018},
  publisher={Verein zur F{\"o}rderung des Open Access Publizierens in den Quantenwissenschaften}
}

@article{benedetti2019parameterized,
  title={Parameterized quantum circuits as machine learning models},
  author={Benedetti, Marcello and Lloyd, Erika and Sack, Stefan and Fiorentini, Mattia},
  journal={Quantum science and technology},
  volume={4},
  number={4},
  pages={043001},
  year={2019},
  publisher={IOP Publishing}
}

@article{ghahramani2013bayesian,
  title={Bayesian non-parametrics and the probabilistic approach to modelling},
  author={Ghahramani, Zoubin},
  journal={Philosophical Transactions of the Royal Society A: Mathematical, Physical and Engineering Sciences},
  volume={371},
  number={1984},
  pages={20110553},
  year={2013},
  publisher={The Royal Society Publishing}
}

@book{shi2011gaussian,
  title={Gaussian process regression analysis for functional data},
  author={Shi, Jian Qing and Choi, Taeryon},
  year={2011},
  publisher={CRC press}
}

@article{schuld2019quantum,
  title={Quantum machine learning in feature Hilbert spaces},
  author={Schuld, Maria and Killoran, Nathan},
  journal={Physical review letters},
  volume={122},
  number={4},
  pages={040504},
  year={2019},
  publisher={APS}
}

@incollection{schuld2021quantum,
  title={Quantum models as kernel methods},
  author={Schuld, Maria and Petruccione, Francesco},
  booktitle={Machine Learning with Quantum Computers},
  pages={217--245},
  year={2021},
  publisher={Springer}
}

@book{lee2018introduction,
  title={Introduction to Riemannian manifolds},
  author={Lee, John M},
  volume={2},
  year={2018},
  publisher={Springer}
}

@article{li2022riemannian,
  title={A riemannian admm},
  author={Li, Jiaxiang and Ma, Shiqian and Srivastava, Tejes},
  journal={arXiv preprint arXiv:2211.02163},
  year={2022}
}

@article{zanardi2006ground,
  title={Ground state overlap and quantum phase transitions},
  author={Zanardi, Paolo and Paunkovi{\'c}, Nikola},
  journal={Physical Review E—Statistical, Nonlinear, and Soft Matter Physics},
  volume={74},
  number={3},
  pages={031123},
  year={2006},
  publisher={APS}
}

@article{gil2024expressivity,
  title={On the expressivity of embedding quantum kernels},
  author={Gil-Fuster, Elies and Eisert, Jens and Dunjko, Vedran},
  journal={Machine Learning: Science and Technology},
  volume={5},
  number={2},
  pages={025003},
  year={2024},
  publisher={IOP Publishing}
}

@article{heinosaari2008notes,
  title={Notes on joint measurability of quantum observables},
  author={Heinosaari, Teiko and Reitzner, Daniel and Stano, Peter},
  journal={Foundations of Physics},
  volume={38},
  number={12},
  pages={1133--1147},
  year={2008},
  publisher={Springer}
}

@article{manzhos2024degeneration,
  title={Degeneration of kernel regression with Matern kernels into low-order polynomial regression in high dimension},
  author={Manzhos, Sergei and Ihara, Manabu},
  journal={The Journal of Chemical Physics},
  volume={160},
  number={2},
  year={2024},
  publisher={AIP Publishing}
}

@article{farr2007shuttle,
  title={The shuttle radar topography mission},
  author={Farr, Tom G and Rosen, Paul A and Caro, Edward and Crippen, Robert and Duren, Riley and Hensley, Scott and Kobrick, Michael and Paller, Mimi and Rodriguez, Ernesto and Roth, Ladislav and others},
  journal={Reviews of geophysics},
  volume={45},
  number={2},
  year={2007},
  publisher={Wiley Online Library}
}

@article{chen2024adaptive,
  title={Adaptive robotic information gathering via non-stationary Gaussian processes},
  author={Chen, Weizhe and Khardon, Roni and Liu, Lantao},
  journal={The International Journal of Robotics Research},
  volume={43},
  number={4},
  pages={405--436},
  year={2024},
  publisher={SAGE Publications Sage UK: London, England}
}

@article{mcclean2018barren,
  title={Barren plateaus in quantum neural network training landscapes},
  author={McClean, Jarrod R and Boixo, Sergio and Smelyanskiy, Vadim N and Babbush, Ryan and Neven, Hartmut},
  journal={Nature communications},
  volume={9},
  number={1},
  pages={4812},
  year={2018},
  publisher={Nature Publishing Group UK London}
}

@article{patti2021entanglement,
  title={Entanglement devised barren plateau mitigation},
  author={Patti, Taylor L and Najafi, Khadijeh and Gao, Xun and Yelin, Susanne F},
  journal={Physical Review Research},
  volume={3},
  number={3},
  pages={033090},
  year={2021},
  publisher={APS}
}

@article{sack2022avoiding,
  title={Avoiding barren plateaus using classical shadows},
  author={Sack, Stefan H and Medina, Raimel A and Michailidis, Alexios A and Kueng, Richard and Serbyn, Maksym},
  journal={PRX Quantum},
  volume={3},
  number={2},
  pages={020365},
  year={2022},
  publisher={APS}
}

@article{peng2025breaking,
  title={Breaking Through Barren Plateaus: Reinforcement Learning Initializations for Deep Variational Quantum Circuits},
  author={Peng, Yifeng and Li, Xinyi and Zhang, Zhemin and Chen, Samuel Yen-Chi and Liang, Zhiding and Wang, Ying},
  journal={arXiv preprint arXiv:2508.18514},
  year={2025}
}

@article{cunningham2025investigating,
  title={Investigating and mitigating barren plateaus in variational quantum circuits: a survey},
  author={Cunningham, Jack and Zhuang, Jun},
  journal={Quantum Information Processing},
  volume={24},
  number={2},
  pages={48},
  year={2025},
  publisher={Springer}
}

@article{arute2019quantum,
  title={Quantum supremacy using a programmable superconducting processor},
  author={Arute, Frank and Arya, Kunal and Babbush, Ryan and Bacon, Dave and Bardin, Joseph C and Barends, Rami and Biswas, Rupak and Boixo, Sergio and Brandao, Fernando GSL and Buell, David A and others},
  journal={Nature},
  volume={574},
  number={7779},
  pages={505--510},
  year={2019},
  publisher={Nature Publishing Group UK London}
}

@inproceedings{glendinning2005bloch,
  title={The bloch sphere},
  author={Glendinning, Ian},
  booktitle={QIA meeting. Vienna},
  year={2005}
}

@article{lim2012matrix,
  title={Matrix power means and the Karcher mean},
  author={Lim, Yongdo and P{\'a}lfia, Mikl{\'o}s},
  journal={Journal of Functional Analysis},
  volume={262},
  number={4},
  pages={1498--1514},
  year={2012},
  publisher={Elsevier}
}

@article{park2023quantum,
  title={Quantum multi-agent reinforcement learning for autonomous mobility cooperation},
  author={Park, Soohyun and Kim, Jae Pyoung and Park, Chanyoung and Jung, Soyi and Kim, Joongheon},
  journal={IEEE Communications Magazine},
  volume={62},
  number={6},
  pages={106--112},
  year={2023},
  publisher={IEEE}
}

@inproceedings{chen2025quantum,
  title={Quantum-train-based distributed multi-agent reinforcement learning},
  author={Chen, Kuan-Cheng and Chen, Samuel Yen-Chi and Liu, Chen-Yu and Leung, Kin K},
  booktitle={2025 IEEE Symposium for Multidisciplinary Computational Intelligence Incubators (MCII Companion)},
  pages={1--5},
  year={2025},
  organization={IEEE}
}

@Inbook{Williams2011,
author="Williams, Colin P.",
title="Quantum Gates",
bookTitle="Explorations in Quantum Computing",
year="2011",
publisher="Springer London",
address="London",
pages="51--122"
}

%%%%%%%%%%%%%%%%%%%%%%%%%%%%%%%%%%%%%%%%%%%%%%%%%%%%%%%%%%%%%%%%%%%%%%%%

\end{document}